\pdfoutput=1
\documentclass[12pt, a4paper]{article}
\usepackage{geometry}
\usepackage[utf8]{inputenc}
\usepackage{setspace}
\usepackage{url}
\usepackage{authblk}

\usepackage{graphicx}
\usepackage{multicol,multirow}
\usepackage{amsmath,amssymb,amsfonts}
\usepackage{amsthm}
\usepackage{rotating}
\usepackage{appendix}
\usepackage{ifpdf}
\usepackage[T1]{fontenc}
\usepackage{newtxtext}
\usepackage{newtxmath}
\usepackage{textcomp}
\usepackage{xcolor}
\usepackage{bm}         
\usepackage{booktabs}   
\usepackage{enumitem}   
\usepackage[colorlinks,allcolors=blue]{hyperref}
\usepackage{array}

\usepackage[style=apa,backend=biber,natbib=true]{biblatex}
\theoremstyle{definition}
\newtheorem{theorem}{Theorem}
\newtheorem{definition}{Definition}
\numberwithin{equation}{section}

\title{A Hybrid NUTS-Gibbs Sampler with State Space Marginalization for Estimation of Dynamic Structural Equation Models with Binomial Outcomes}
\author[1]{Øystein Sørensen}
\author[2,3]{Ethan M. McCormick}
\affil[1]{Department of Psychology, University of Oslo, Oslo, Norway}
\affil[2]{Educational Statistics and Research Methods, School of Education, University of Delaware, Newark, Delaware, USA}
\affil[3]{Methodology \& Statistics Department, Institute of Psychology, Leiden University, Leiden, Netherlands}
\date{}

\begin{document}
\maketitle
\begin{abstract}
Dynamic structural equation modeling (DSEM) is widely used for analyzing intensive longitudinal data (ILD). Although many ILD have categorical (Bernoulli or binomially distributed) responses, currently available Metropolis-within-Gibbs samplers for estimating DSEMs are limited to using the probit link and the Bernoulli distribution. These samplers scale poorly with increasing model complexity and/or data size. Here, we present a hybrid sampler---alternating between one step of the No-U-Turn Sampler (NUTS) and one Gibbs step---which solves both of these problems: the Gibbs step naturally handles Pólya-Gamma distributed latent variables arising from binomially distributed responses with a logit link, and the NUTS step utilizes a Kalman filter to exactly marginalize over latent states, alleviating the need to sample these variables. We demonstrate in simulation experiments that the proposed sampler is more efficient than alternative algorithms, and that it makes DSEM estimation with binomial data feasible for larger data and models than what has previously been possible. We also illustrate its use in an example application of predicting panic attacks.
\end{abstract}

\noindent\textbf{Keywords:} Binomial Response, Dynamic Structural Equation Modeling, NumPyro, No-U-Turn Sampler, Pólya-Gamma

\section{Introduction}
Dynamic structural equation modeling (DSEM) \citep{asparouhovDynamicStructuralEquation2018} has become highly popular for analysis of intensive longitudinal data, in which each participant is measured at a high number of timepoints over a relatively short time period \citep{hamakerFrontiersModelingIntensive2018}. Discrete outcomes are common in such data \citep{castro-alvarezManyReliabilitiesPsychological2025}, and DSEMs with binary \citep{mcneishDynamicStructuralEquation2024} or ordinal \citep{mcneishExploringHowMany2025} responses can be analyzed in \textsc{Mplus} \citep{mplus} using a probit link and a latent response formulation \citep{albertBayesianAnalysisBinary1993}.

\citet{asparouhovDynamicStructuralEquation2018} proposed a Metropolis-within-Gibbs sampler for estimating DSEMs, which mainly consists of Gibbs sampling, interspersed with a Metropolis step for updating participant-specific variance components. In the relatively common case in which variances do not vary between participants, it reduces to a pure Gibbs sampler. Unfortunately, this algorithm---which is implemented in \textsc{Mplus}---suffers from a computational bottleneck. Letting $N$ denote the number of participants and $T$ denote the number of timepoints, it requires sampling $\mathcal{O}(N \cdot T)$ parameters in each iteration. For conditionally Gaussian responses, \citet{sorensenEfficientBayesianEstimation2026} showed how the number of parameters to sample could be reduced to $\mathcal{O}(N+T)$ by exactly integrating over latent states with a Kalman filter \citep{kalmanNewApproachLinear1960}. However, the method of \citet{sorensenEfficientBayesianEstimation2026} does not work with non-Gaussian responses, as the Kalman filter is no longer exact. Still, given the high number of participants and timepoints in datasets typically found in applications, developing scalable algorithms is critical for enabling the DSEM framework to be fully utilized without forcing practitioners to use sub-optimal models for the sake of computational feasibility alone, e.g., by using sum scores \citep{mcneishThinkingTwiceSum2020} or treating discrete data as continuous \citep{mcneishExploringHowMany2025}. For example, the 60 datasets available in the \textsc{openESM} database \citep{siepeIntroducingOpenESMDatabase2025} at the time of writing, contain on average $N=269$ participants measured at $T=104$ timepoints. Estimating DSEMs with data of this size using currently available algorithms is highly challenging. 

In addition to the computational scaling issue, Gibbs samplers suffer from two further limitations. Firstly, Gibbs sampling requires conditionally conjugate prior distributions, and this severely limits both the ability of practitioners to specify realistic priors based on domain knowledge as recommended in a proper Bayesian workflow \citep{schadPrincipledBayesianWorkflow2020}, and makes it challenging to extend the DSEM framework, e.g., by allowing asymmetric threshold dynamics \citep{schaafUncoveringAsymmetricTemporal2025} or nonlinear effects via regression splines \citep{sorensenModelingCyclesTrends2025}. Secondly, the number of parameters in a DSEM is typically very high in common use cases. For example, a single latent trait in $N=50$ participants over $T=50$ timepoints contributes $2{,}500$ latent trait parameters, in addition to all participant-specific and population-level parameters. Furthermore, these parameters are typically highly correlated. It is well-known that Gibbs sampling tends to be inefficient in such cases \citep{parkImprovingGibbsSampler2022}, producing Markov chains with high autocorrelation. In contrast, Hamiltonian Monte Carlo (HMC) \citep{Neal2011} is particularly efficient for exactly these kinds of challenging posteriors \citep{betancourtConceptualIntroductionHamiltonian2018}. 

In this paper, we aim to improve all the aforementioned issues of the DSEM estimation algorithm of \citet{asparouhovDynamicStructuralEquation2018}, focusing on the case of binomially distributed responses. To this end, we propose a hybrid NUTS-Gibbs algorithm. In the NUTS step, the sampler uses HMC through the No-U-Turn Sampler (NUTS) \citep{hoffmanNoUTurnSamplerAdaptively2014} combined with marginalization of within-level latent states through a Kalman filter exactly as in \citet{sorensenEfficientBayesianEstimation2026}, but conditional on latent responses. In the Gibbs step, latent responses are then sampled conditionally on all other model parameters. The number of parameters to be sampled in each iteration is still $\mathcal{O}(N \cdot T)$, but the computationally demanding NUTS step samples only $\mathcal{O}(N+T)$ parameters while the $\mathcal{O}(N \cdot T)$ latent responses are sampled in a single highly parallelizable Gibbs step. This way, we exploit NUTS' ability to efficiently explore high-dimensional parameter spaces with the speed of the parallelized Gibbs sampler. 

Our proposed algorithm supports both binomially distributed data with the logit link using Pólya-Gamma distributed latent variables \citep{polsonBayesianInferenceLogistic2013} and the probit link using truncated normal latent variables \citep{albertBayesianAnalysisBinary1993}. To the best of our knowledge, estimation of DSEMs with binomial data using the Pólya-Gamma distribution has not been discussed in the literature, but particularly with binomial observations based on multiple trials, this is a more efficient approach than using truncated normals, which requires either new latent variables or expanding the data into Bernoulli trials. However, such responses are common, e.g., in data from educational technology \citep{juddTrainingSpatialCognition2021,settlesTrainableSpacedRepetition2016} and cognitive testing \citep{paolilloCharacterizingPerformanceSuite2024,schmitter-edgecombeCapturingCognitiveCapacity2025,sliwinskiReliabilityValidityAmbulatory2018} and model expansions which accommodate these data types are warranted.

The paper proceeds as follows: In Section \ref{sec:DSEM} we formally define the DSEM framework with discrete responses, in Section \ref{sec:StateSpaceFormulation} we review the state space formulation of \citet{sorensenEfficientBayesianEstimation2026} and show how it applies with discrete responses, and in Section \ref{sec:NUTSGibbs} we define our hybrid NUTS-Gibbs sampler. In Section \ref{sec:Simulations} we show simulation experiments for various DSEMs with binomial responses, comparing the NUTS-Gibbs sampler to alternative algorithms and in Section \ref{sec:ApplicationExample} we show an example application where we analyze an ecological momentary assessment dataset with a mix of binomial and continuous observations. In Section \ref{sec:Limitation} we demonstrate why the proposed hybrid sampler is inefficient for ordinal and negative binomially distributed responses, and in Section \ref{sec:Discussion} we discuss a number of interesting opportunities for further improvements in DSEM estimation.

\section{Dynamic Structural Equation Models with Discrete Responses}
\label{sec:DSEM}

We here show an extension of DSEM for handling discrete responses using either a logit or a probit link. While \citet{asparouhovDynamicStructuralEquation2018} pointed out how this can be done with a probit link, and \citet{mcneishDynamicStructuralEquation2024,mcneishExploringHowMany2025} show several examples, to the best of our knowledge the complete mathematical formulation---which differs from that used with purely Gaussian responses---has not been presented in the literature.

Consider a $U$-dimensional response vector $\bm{y}_{it}$ which is measured in $N$ participants $(i=1,\dots,N)$ at $T$ timepoints ($t=1,\dots,T$). Let the observation vector be drawn from a family of distributions $\mathcal{F}$ parametrized by an expected value $\bm{\mu}_{it}$ and a (possibly empty) set of ancillary parameters $\bm{\theta}_{it}$, $\bm{y}_{it} \sim \mathcal{F}(\bm{\mu}_{it}, \bm{\theta}_{it})$. The linear predictor, which we denote by $\bm{y}_{it}^{*}$, is connected to $\bm{\mu}_{it}$ through $\bm{\mu}_{it} = H(\bm{y}_{it}^{*})$, where $H(\cdot)$ is either the identity ($H(y) = y$), probit ($H(y) = \Phi(y)$; where $\Phi(\cdot)$ is the cumulative distribution function of the standard normal distribution \citep[Ch. 2]{skrondalGeneralizedLatentVariable2004}), or logit ($H(y) = [1+\exp(-y)]^{-1}$) inverse link function. Consequently, the choice of $\mathcal{F}$ is restricted to families compatible with these links: the Gaussian distribution for the identity link, and the Bernoulli, binomial, or categorical distribution for the probit and logit links.

We assume the linear predictor for each observation can be decomposed into a time-varying component, $\bm{y}_{3,t}^{*}$, a component $\bm{y}_{2,i}^{*}$ varying between participants, and a component $\bm{y}_{1,it}^{*}$ capturing all remaining variability, yielding
\begin{equation}
    \label{eq:ResponseDecomposition}
    \bm{y}_{it}^{*} = \bm{y}_{1,it}^{*} + \bm{y}_{2,i}^{*} + \bm{y}_{3,t}^{*}.
\end{equation}
The $V_{3}$-dimensional latent variables $\bm{\eta}_{3,t}$ varying between timepoints are related to $\bm{y}_{3,t}^{*}$ through
\begin{equation}
    \begin{aligned}
        \bm{y}_{3,t}^{*}    & = \bm{\nu}_{3} + \bm{\Lambda}_{3} \bm{\eta}_{3,t} + \bm{K}_{3} \bm{X}_{3,t} \\
        \bm{\eta}_{3,t} & = \bm{\alpha}_{3} + \bm{B}_{3}\bm{\eta}_{3,t} + \bm{\Gamma}_{3}\bm{X}_{3,t} + \bm{\xi}_{3,t}, \quad \bm{\xi}_{3,t} \sim \mathcal{N}(\bm{0}, \bm{\Psi}_{3}),
    \end{aligned}
    \label{eq:BetweenTimepointModel}
\end{equation}
where $\bm{\nu}_{3}$ and $\bm{\alpha}_{3}$ are intercepts, $\bm{\Lambda}_{3}$ is a loading matrix, $\bm{B}_{3}$ is a strictly lower triangular matrix for regression between latent variables, $\bm{X}_{3,t}$ is a vector of covariates varying between timepoints, $\bm{K}_{3}$ and $\bm{\Gamma}_{3}$ are matrices with regression coefficients, and $\bm{\Psi}_{3}$ is the covariance matrix of the disturbance term $\bm{\xi}_{3,t}$ for the latent variables. $\mathcal{N}(\bm{a}, \bm{B})$ denotes a normal distribution with mean $\bm{a}$ and covariance matrix $\bm{B}$. 

The $V_{2}$-dimensional latent variables $\bm{\eta}_{2,i}$ varying between participants are related to $\bm{y}_{2,i}^{*}$ through
\begin{equation}
    \begin{aligned}
        \bm{y}_{2,i}^{*}    & = \bm{\nu}_{2} + \bm{\Lambda}_{2} \bm{\eta}_{2,i} + \bm{K}_{2}\bm{X}_{2,i} \\
        \bm{\eta}_{2,i} & = \bm{\alpha}_{2} + \bm{B}_{2} \bm{\eta}_{2,i} + \bm{\Gamma}_{2} \bm{X}_{2,i} + \bm{\xi}_{2,i}, \quad \bm{\xi}_{2,i} \sim \mathcal{N}(\bm{0}, \bm{\Psi}_{2}),
    \end{aligned}
    \label{eq:BetweenIndividualModel}
\end{equation}
where the parameters have similar interpretations as for \eqref{eq:BetweenTimepointModel} except that the covariates $\bm{X}_{2,i}$ and noise $\bm{\xi}_{2,i}$ now vary between participants.

We define two operators before introducing the within-level model:

\begin{definition}[Lag Operator]
\label{def:lag-operator}
    The lag operator $L$ is such that $L^{k} \bm{x}_{t} = \bm{x}_{t-k}$ \citep{kilianStructuralVectorAutoregressive2017}.
\end{definition}

\begin{definition}[Polynomial Matrices]
    \label{def:polynomial-matrices}
    For any collection of matrices $\bm{M}_{0}, \bm{M}_{1}, \dots, \bm{M}_{L}$, let $\bm{M}(L) = \sum_{l=0}^{L} \bm{M}_{l} L^{l}$ \citep{kilianStructuralVectorAutoregressive2017}.
\end{definition}

The within-level model relates the $V_{1}$-dimensional latent variables $\bm{\eta}_{1,it}$ varying both between participants and timepoints to $\bm{y}_{1,it}^{*}$,
\begin{equation}
    \begin{aligned}
        \bm{y}_{1,it}^{*}    & = \bm{\nu}_{1,it} + \bm{\Lambda}_{1,it}(L) \bm{\eta}_{1,it} + \bm{R}_{it}(L) \bm{y}_{1,it}^{*} + \bm{K}_{1,it} \bm{X}_{1,it} \\
        \bm{\eta}_{1,it} & = \bm{\alpha}_{1,it} + \bm{B}_{1,it}(L) \bm{\eta}_{1,it} + \bm{Q}_{it}(L) \bm{y}_{1,it}^{*} + \bm{\Gamma}_{1,it} \bm{X}_{1,it} + \bm{\xi}_{1,it}, \quad \bm{\xi}_{1,it} \sim \mathcal{N}(\bm{0}, \bm{\Psi}_{1,it}),
    \end{aligned}
    \label{eq:WithinLevelModel}
\end{equation}
where we have used Definitions \ref{def:lag-operator} and \ref{def:polynomial-matrices} to write
\begin{align*}
     & \bm{\Lambda}_{1,it}(L) = \sum_{l=0}^{L} \bm{\Lambda}_{1,lit} L^{l}, \quad \bm{R}_{it}(L) = \sum_{l=0}^{L} \bm{R}_{lit} L^{l}, \quad  \bm{B}_{1,it}(L) = \sum_{l=0}^{L} \bm{B}_{1,lit} L^{l}, \quad \bm{Q}_{it}(L) = \sum_{l=0}^{L} \bm{Q}_{lit} L^{l}.
\end{align*}
Inside the sums, $\bm{\Lambda}_{1,lit}$ is the factor loading matrix relating the lag-$l$ latent variables to the linear predictor, $\bm{R}_{lit}$ and $\bm{B}_{1,lit}$ for $l>0$ are autoregression matrices, $\bm{R}_{0it}$ and $\bm{B}_{1,0it}$ are strictly lower triangular matrices for regression between components of $\bm{y}_{1,it}^{*}$ and $\bm{\eta}_{1,it}$, respectively, and the matrix $\bm{Q}_{lit}$ relates the lag-$l$ linear predictor to the latent variables. Since the within-level model \eqref{eq:WithinLevelModel} is defined in terms of the linear predictor $\bm{y}_{1,it}^{*}$ rather than the observations $\bm{y}_{1,it}$ it differs from the standard DSEM even when $H(\cdot)$ is the identity link. The model is, however, equivalent to the one discussed by \citet[p. 363]{asparouhovDynamicStructuralEquation2018} for discrete responses.

\section{State Space Formulation and Data Augmentation}
\label{sec:StateSpaceFormulation}

Theorem 1 of \citet{sorensenEfficientBayesianEstimation2026} showed that the within-level model of the DSEM for continuous responses could be reformulated as a linear Gaussian state space model (LG-SSM). This allows marginalizing out the within-level latent variables $\bm{\eta}_{1,it}$ analytically using a Kalman filter, alleviating the need to sample them. This approach, however, requires a conditionally Gaussian measurement model. Hence, the observation vector $\bm{y}_{it}$, consisting of discrete responses, cannot be used directly. Instead, we rely on data augmentation to map the discrete observations to continuous, conditionally Gaussian pseudo-observations, denoted $\tilde{\bm{y}}_{it}$, whose within-level components we define as $\tilde{\bm{y}}_{1,it} = \tilde{\bm{y}}_{it} - \bm{y}_{2,i}^{*} - \bm{y}_{3,t}^{*}$. 

Since our within-level model \eqref{eq:WithinLevelModel} contains lagged linear predictors $\bm{y}_{1,it}^{*}$ rather than lagged observations $\bm{y}_{1,it}$, Theorem 1 of \citet{sorensenEfficientBayesianEstimation2026} must be slightly modified. For the discrete-response DSEM presented in the previous section, the within-level model \eqref{eq:WithinLevelModel} is equivalent to the LG-SSM
\begin{equation}
    \label{eq:PseudoStateSpaceModel}
    \begin{aligned}
        \tilde{\bm{y}}_{1,it} &= \bm{Z}_{it} \tilde{\bm{\eta}}_{1,it} + \bm{\epsilon}_{1,it}, \quad \bm{\epsilon}_{1,it} \sim \mathcal{N}(\bm{0}, \bm{\Sigma}_{1,it}) \\
        \tilde{\bm{\eta}}_{1,i,t+1} & = \bm{T}_{it} \tilde{\bm{\eta}}_{1,it} + \bm{c}_{it} + \bm{w}_{it}, \quad \bm{w}_{it} \sim \mathcal{N}\left(\bm{0}, \bm{W}_{it}\right),
    \end{aligned}
\end{equation}
where we have defined the augmented state vector
\begin{equation*}
    \tilde{\bm{\eta}}_{1,it} = 
    \begin{bmatrix}
    \bm{\eta}_{1,it}^{T} & \dots & \bm{\eta}_{1,i,t-L+1}^{T} & (\bm{y}_{1,it}^{*})^{T} & \dots & (\bm{y}_{1,i,t-L+1}^{*})^{T}
    \end{bmatrix}^{T}
\end{equation*}
and the matrix $\bm{Z}_{it} = [\bm{0}_{U \times L V_1} \quad \bm{I}_{U \times U} \quad \bm{0}_{U \times (L-1)U}]$. We refer to Online Resource 1 for a proof, and for definitions of the matrices $\bm{T}_{it}$, $\bm{c}_{it}$, and $\bm{W}_{it}$ involved in \eqref{eq:PseudoStateSpaceModel}. The definition of the pseudo-observation $\tilde{\bm{y}}_{it}$ and the covariance matrix $\bm{\Sigma}_{1,it}$ of the observation noise $\bm{\epsilon}_{1,it}$ depend entirely on the latent response distributions, about which we provide the details below.

\subsection{Probit Link}

When $H(\cdot)$ is the standard normal cumulative distribution function, we follow the latent response framework of \citet{albertBayesianAnalysisBinary1993}. For Bernoulli distributed binary responses, the $j$th element of the latent response vector $\tilde{\bm{y}}_{it}$ is drawn from a truncated normal distribution centered on the linear predictor $y_{itj}^{*}$,
\begin{equation}
    \tilde{y}_{itj} \sim 
    \begin{cases} 
        \mathcal{TN}_{(0, \infty)}(y_{itj}^{*}, 1) & \text{if } y_{itj} = 1 \\ 
        \mathcal{TN}_{(-\infty, 0]}(y_{itj}^{*}, 1) & \text{if } y_{itj} = 0,
    \end{cases}
    \label{eq:ProbitGibbs}
\end{equation}
for $j=1,\dots,U$, where $\mathcal{TN}_{(a,b)}(\mu, \sigma^2)$ denotes a normal distribution with mean $\mu$ and variance $\sigma^2$ truncated to the interval $(a,b)$. For ordinal responses, the boundaries $0$ and $\infty$ are replaced by the corresponding adjacent threshold parameters contained in $\bm{\theta}_{it}$. Because the variance of the pseudo-response is fixed to 1 for identification, the observation noise covariance matrix for the LG-SSM \eqref{eq:PseudoStateSpaceModel} is simply the identity matrix, $\bm{\Sigma}_{1,it} = \bm{I}$.

\subsection{Logit Link}

When $H(\cdot)$ is the logistic cumulative distribution function, the model uses a logit link and we employ the Pólya-Gamma data augmentation scheme introduced by \citet{polsonBayesianInferenceLogistic2013}. A random variable $X$ is said to be Pólya-Gamma distributed with parameters $b>0$ and $c \in \mathbb{R}$, $X \sim \mathcal{PG}(b,c)$ if it is equal in distribution to the infinite sum
\begin{equation}
\label{eq:PolyaGammaSum}
    \frac{1}{2 \pi^{2}} \sum_{k=1}^{\infty} \frac{g_{k}}{(k-1/2)^{2} + c^{2}/(4\pi^{2})},
\end{equation}
where $g_{k}$ is gamma distributed with shape $b$ and rate $1$, $g_{k} \sim \mathcal{G}(b, 1)$. \citet{polsonBayesianInferenceLogistic2013} show how Pólya-Gamma distributed random variables can be efficiently obtained with a rejection sampler whose probability of acceptance is uniformly bounded below at $0.99919$.

For binomial data where $y_{itj}$ represents the number of successes in $n_{itj}$ trials---with the Bernoulli distribution corresponding to the special case $n_{itj}=1$---we first define the shifted response $\kappa_{itj} = y_{itj} - n_{itj}/2$. Next, for each observation we introduce the auxiliary random variable $\omega_{itj}$ drawn from a Pólya-Gamma distribution conditioned on the total linear predictor defined in \eqref{eq:ResponseDecomposition},
\begin{equation}
\omega_{itj} \sim \mathcal{PG}(n_{itj}, y_{itj}^{*}).    
\label{eq:LogitGibbs}
\end{equation}
As shown by \citet{polsonBayesianInferenceLogistic2013}, conditioning on $\omega_{itj}$ yields a continuous pseudo-observation defined by the ratio of the shifted response to the auxiliary variable,
\begin{equation}
    \tilde{y}_{itj} = \kappa_{itj}/\omega_{itj}.
    \label{eq:LogitTransformation}
\end{equation}
Using the results of \citet{polsonBayesianInferenceLogistic2013}, we have that the measurement covariance matrix in \eqref{eq:PseudoStateSpaceModel} is $\bm{\Sigma}_{1,it} = \text{diag}( 1/\omega_{it1}, \dots, 1/\omega_{itU} )$. Conditioned on $\bm{\omega}_{it}$, the pseudo-observations perfectly satisfy the linear Gaussian measurement equation \eqref{eq:PseudoStateSpaceModel}, allowing exact marginalization of the within-level states. 

\section{A Hybrid NUTS-Gibbs Sampler with State Space Marginalization}
\label{sec:NUTSGibbs}

The NUTS-Kalman algorithm proposed by \citet{sorensenEfficientBayesianEstimation2026} for analytically marginalizing over within-level latent variables can be utilized also with discrete responses, but it requires some modifications. Let $\bm{\Theta}$ denote the set of all population-level parameters, let $\bm{\eta}_{3} = \{\bm{\eta}_{3,1}, \dots, \bm{\eta}_{3,T}\}$ be the set of all timepoint-specific parameters, $\bm{\eta}_{2}=\{\bm{\eta}_{2,1}, \dots, \bm{\eta}_{2,N}\}$ the set of all participant-specific parameters, $\tilde{\bm{\eta}}_{1,i} = \{\tilde{\bm{\eta}}_{1,i1}, \dots, \tilde{\bm{\eta}}_{1,iT}\}$ the set of all augmented within-level latent variables in the state-space formalization \eqref{eq:PseudoStateSpaceModel} for participant $i$, $\tilde{\bm{y}}_{i} = \{\tilde{\bm{y}}_{i1}, \dots, \tilde{\bm{y}}_{iT}\}$ the set of all latent responses for participant $i$, $\tilde{\bm{y}} = \{\tilde{\bm{y}}_{1}, \dots, \tilde{\bm{y}}_{N}\}$ the set of all latent responses, and similarly $\bm{y} = \{\bm{y}_{1}, \dots, \bm{y}_{T}\}$ and $\bm{y}^{*}= \{\bm{y}_{1}^{*}, \dots, \bm{y}_{T}^{*}\}$. 

At a high level, the proposed algorithm runs as follows. The details of forward filtering backward sampling (FFBS) \citep{carterGibbsSamplingState1994,fruhwirth-schnatterDataAugmentationDynamic1994} and marginalization with the Kalman filter and the NUTS algorithm are described below.
\begin{enumerate}
    \item \textbf{Initialization:} Initialize parameters $\bm{\Theta}^{(0)}$, $\bm{\eta}_{2}^{(0)}$, $\bm{\eta}_{3}^{(0)}$, and $\tilde{\bm{y}}^{(0)}$.
    \item \textbf{Repeat:} Repeat steps 3--4 for $k=1,2,\dots$ until convergence.
    \item \textbf{Gibbs step:}
    \begin{enumerate}[label=(\roman*)]
        \item Using FFBS, sample augmented within-level latent variables:
        \begin{equation*}
            \tilde{\bm{\eta}}_{1}^{(k)} \sim p\left\{\tilde{\bm{\eta}}_{1} | \bm{\Theta}^{(k-1)}, \bm{\eta}_{2}^{(k-1)}, \bm{\eta}_{3}^{(k-1)}, \tilde{\bm{y}}^{(k-1)}\right\}.
        \end{equation*}
        \item Compute the linear predictor $\bm{y}_{it}^{*} = \bm{y}_{1,it}^{*} + \bm{y}_{2,i}^{*} + \bm{y}_{3,t}^{*}$ conditionally on the current parameter values using the definitions in \eqref{eq:BetweenTimepointModel}, \eqref{eq:BetweenIndividualModel}, and \eqref{eq:WithinLevelModel}.
        \item Using \eqref{eq:ProbitGibbs} with a probit link or \eqref{eq:LogitGibbs}--\eqref{eq:LogitTransformation} with a logit link, sample latent responses $\tilde{\bm{y}}_{it} \sim p(\tilde{\bm{y}}_{it} | \bm{y}_{it}^{*})$ for $i=1,\dots,N$ and $t=1,\dots,T$.        
    \end{enumerate}
    \item \textbf{NUTS step:}
    \begin{enumerate}[label=(\roman*)]
        \item Run one step of NUTS targeting the marginal posterior
        \begin{equation*}
            p\left\{\bm{\Theta}, \bm{\eta}_{2}, \bm{\eta}_{3} | \tilde{\bm{y}}^{(k)}\right\} \propto p\left(\bm{\Theta}, \bm{\eta}_{2}, \bm{\eta}_{3}\right) \int p\left\{\tilde{\bm{y}}^{(k)} | \bm{\Theta}, \tilde{\bm{\eta}}_{1}, \bm{\eta}_{2}, \bm{\eta}_{3}\right\} p\left(\tilde{\bm{\eta}}_{1} | \bm{\Theta}, \bm{\eta}_{2}, \bm{\eta}_{3}\right) \text{d}\tilde{\bm{\eta}}_{1},
        \end{equation*}
        with the integral in the above expression computed exactly using the Kalman filter. This yields updated values $\{\bm{\Theta}^{(k)}, \bm{\eta}_{2}^{(k)}, \bm{\eta}_{3}^{(k)}\}$.
    \end{enumerate}
\end{enumerate}

This algorithm targets the true posterior distribution. That is, when run for a sufficiently large number of iterations $k$, the sample
\begin{equation*}
    \left\{\bm{\Theta}^{(k)}, \bm{\eta}_{3}^{(k)}, \bm{\eta}_{2}^{(k)}, \tilde{\bm{\eta}}_{1}^{(k)}\right\} , \quad k=1,\dots,K,
\end{equation*}
will eventually be arbitrarily close to being a sample from the true posterior distribution. The sampling step 3(i) and the Kalman filter operation in step 4(i) can be parallelized across participants but not across timepoints, due to the temporal autocorrelation in \eqref{eq:PseudoStateSpaceModel}. Sampling of latent responses in step 3(iii), on the other hand, can be parallelized both across participants and timepoints.

\subsection{Kalman Filter and Forward Filtering Backward Sampling}
\label{sec:KalmanFilterFFBS}

We here present the FFBS algorithm \citep{carterGibbsSamplingState1994,fruhwirth-schnatterDataAugmentationDynamic1994} used in the Gibbs step and the Kalman filter used to compute the marginal posterior over $\tilde{\bm{\eta}}_{1,it}$ in the NUTS step. Because the within-level latent variables and pseudo-responses, $\tilde{\bm{\eta}}_{1,it}$ and $\tilde{\bm{y}}_{1,it}$, are conditionally independent across participants given the population and between-level parameters $\{\bm{\Theta}, \bm{\eta}_{2}, \bm{\eta}_{3}\}$, the state-space marginalization and sampling steps can be executed independently across the $N$ participants.

For a given participant $i$, let $\bm{m}_{it|t-1}$ and $\bm{P}_{it|t-1}$ denote the expected value and covariance matrix of the augmented state $\tilde{\bm{\eta}}_{1,it}$ conditioned on the pseudo-observations up to time $t-1$. Let $\bm{m}_{it|t}$ and $\bm{P}_{it|t}$ denote the corresponding filtered moments conditioned on pseudo-observations up to time $t$. Conditionally on the current values of the parameters, the forward filtering pass iterates through $t = 1, \dots, T$ with the prediction equations
\begin{equation}
    \begin{aligned}
        \bm{m}_{it|t-1} &= \bm{T}_{i,t-1} \bm{m}_{i,t-1|t-1} + \bm{c}_{i,t-1}, \\
        \bm{P}_{it|t-1} &= \bm{T}_{i,t-1} \bm{P}_{i,t-1|t-1} \bm{T}_{i,t-1}^{T} + \bm{W}_{i,t-1},
\end{aligned}
\label{eq:KFPrediction}
\end{equation}
where the initial state moments $\bm{m}_{i1|0}$ and $\bm{P}_{i1|0}$ are determined by the implied stationary distribution or set to priors \citep{durbinTimeSeriesAnalysis2012}. At each timepoint $t$, we compute the prediction error $\bm{v}_{it}$ and its covariance matrix $\bm{F}_{it}$,
\begin{equation}
    \label{eq:KalmanMeasurement1}
    \begin{aligned}
    \bm{v}_{it} &= \tilde{\bm{y}}_{1,it} - \bm{Z}_{it} \bm{m}_{it|t-1},  \\
    \bm{F}_{it} &= \bm{Z}_{it} \bm{P}_{it|t-1} \bm{Z}_{it}^{T} + \bm{\Sigma}_{1,it}. 
\end{aligned}
\end{equation}
The filtered state moments are subsequently updated via the Kalman gain $\bm{K}_{it} = \bm{P}_{it|t-1} \bm{Z}_{it}^{\top} \bm{F}_{it}^{-1}$, yielding
\begin{equation}
    \label{eq:KalmanMeasurement2}
 \begin{aligned}
    \bm{m}_{it|t} &= \bm{m}_{it|t-1} + \bm{K}_{it} \bm{v}_{it}, \\
    \bm{P}_{it|t} &= \bm{P}_{it|t-1} - \bm{K}_{it} \bm{Z}_{it} \bm{P}_{it|t-1}.
\end{aligned}   
\end{equation}

The forward pass serves two roles in the proposed hybrid algorithm. First, for the NUTS step, it analytically marginalizes out the within-level states to compute the target log-density. The exact log marginal likelihood of the pseudo-observations is given by $\log p(\tilde{\bm{y}} | \bm{\Theta}, \bm{\eta}_{2}, \bm{\eta}_{3}) = \sum_{i=1}^{N} \sum_{t=1}^{T} \log \mathcal{N}(\bm{v}_{it} | \bm{0}, \bm{F}_{it})$. Second, for the Gibbs step, the saved moments from a final forward pass are used to jointly sample the augmented state trajectory via backward sampling. Starting at the final timepoint $T$, a draw is taken from the filtered distribution, $\tilde{\bm{\eta}}_{1,iT} \sim \mathcal{N}(\bm{m}_{iT|T}, \bm{P}_{iT|T})$. For $t = T-1, T-2, \dots, 1$, states are recursively sampled conditionally on the future drawn state, $(\tilde{\bm{\eta}}_{1,it} | \tilde{\bm{\eta}}_{1,i,t+1}) \sim \mathcal{N}(\bm{h}_{it}, \bm{H}_{it})$, where the backward smoothing moments are
\begin{align*}
    \bm{J}_{it} &= \bm{P}_{it|t} \bm{T}_{it}^{T} \bm{P}_{i,t+1|t}^{-1}, \\
    \bm{h}_{it} &= \bm{m}_{it|t} + \bm{J}_{it} \left(\tilde{\bm{\eta}}_{1,i,t+1} - \bm{T}_{it}\bm{m}_{it|t} - \bm{c}_{it}\right), \\
    \bm{H}_{it} &= \bm{P}_{it|t} - \bm{J}_{it} \bm{P}_{i,t+1|t} \bm{J}_{it}^{T}.
\end{align*}
Finally, the required within-level latent variables $\bm{\eta}_{1,it}$ and the linear predictors $\bm{y}_{1,it}^{*}$ are deterministically extracted from the sampled augmented vector $\tilde{\bm{\eta}}_{1,it}$.

Note that efficient numerical implementations may not exactly perform the matrix operations outlined here, as reformulations may lead to considerable speed-up \citep{jungbackerLikelihoodbasedDynamicFactor2015,koopmanFastFilteringSmoothing2000}.

\subsection{The No-U-Turn Sampler}

NUTS was introduced by \citet{hoffmanNoUTurnSamplerAdaptively2014} and solved a long-standing problem in Hamiltonian Monte Carlo (HMC) \citep{Neal2011}. HMC is a variant of the Metropolis-Hastings algorithm which treats all (say, $d$) model parameters as elements of a potential energy vector $\bm{q}$ and introduces $d$ auxiliary momentum variables in the vector $\bm{p}$. By calculating the exact gradient of the log posterior with respect to the potential energy (i.e., the parameters) using automatic differentiation \citep{griewank2008evaluating}, Hamiltonian dynamics in a $2d$-dimensional phase space can be exactly simulated, yielding proposals for all parameters at once with theoretical acceptance probability $1$. In practice, however, the Hamiltonian dynamics has to be approximated using a leapfrog integrator \citep{betancourtConceptualIntroductionHamiltonian2018}, and due to small numerical errors the acceptance rate is in practice slightly below $1$. The step size of the integrator can be tuned during warm-up to achieve a target acceptance ratio, typically $0.80$ or higher. A more important problem is that to ensure efficient exploration of the parameter space, the integrator has to be stopped while the parameters are far away from the current value, before making a U-turn. As its name indicates, NUTS solves exactly this problem, and it does so by recursively constructing a binary tree whose leaves are the parameter values \citep{hoffmanNoUTurnSamplerAdaptively2014}.

NUTS has repeatedly been demonstrated to yield more efficient inference for complex hierarchical models than, e.g., Gibbs sampling or classical Metropolis-Hastings \citep{betancourtConceptualIntroductionHamiltonian2018,betancourtHamiltonianMonteCarlo2015}, and is now the default algorithm in leading probabilistic programming frameworks like \textsc{Stan} \citep{carpenterStanProbabilisticProgramming2017}, \textsc{NumPyro} \citep{phanComposableEffectsFlexible2019}, \textsc{PyMC} \citep{abril-plaPyMCModernComprehensive2023}, and \textsc{Turing.jl} \citep{fjeldeTuringjlGeneralPurposeProbabilistic2025}. While being highly popular in statistics, \textsc{Stan} only allows sampling with NUTS and not hybrid samplers of the kind proposed here. We hence used \textsc{NumPyro}, which is based on the just-in-time (JIT) compiler \textsc{JAX} \citep{frostigCompilingMachineLearning2019} and uses NUTS with an iterative rather than recursive tree-building algorithm \citep[Appendix A]{phanComposableEffectsFlexible2019}.

\section{Simulation Experiments}
\label{sec:Simulations}

We here present simulation experiments comparing the hybrid NUTS-Gibbs sampler to alternative algorithms for estimating binomial-response DSEMs. In particular, for the first simulation example we implemented---in both \textsc{JAGS} \citep{plummer2003jags} and \textsc{NumPyro}---a Gibbs sampler close to the one proposed by \citet{asparouhovDynamicStructuralEquation2018}, and---in \textsc{NumPyro} and \textsc{Stan}---a pure NUTS algorithm sampling over all parameters with no marginalization. The former algorithm utilized the discrete response formulation, whereas the latter used the logit/probit link directly since in NUTS this would just introduce additional parameters with no benefits.

Simulations were run in \textsc{Python} 3.14.3 on a MacBook Pro with an Apple M1 Max 32GB chip, using \textsc{CmdStanPy} 1.3.0 \citep{cmdstanpy}, \textsc{Stan} 2.36.0, \textsc{JAX} 0.7.2, \textsc{NumPyro} 0.20.0, and \textsc{polyagamma} 2.0.2 \citep{bleki_polyagamma}. An exception was the \textsc{JAGS} 4.3.2 implementation, which was run with \textsc{R} 4.5.2 \citep{rcoreteamLanguageEnvironmentStatistical2025} through \textsc{jagsUI} 1.5.2 \citep{kellnerJagsUIWrapperRjags2026}. Posterior analyses and diagnostics were done in \textsc{ArviZ} 0.23.4 \citep{martinArviZModularFlexible2026}. All simulation results can be reproduced using code openly available in our OSF repository at \url{https://osf.io/ds3hp}.

Since all algorithms target the true posterior distribution, they will eventually give accurate output if given enough time and computer memory. The question of relevance for a user is how long one has to wait to get sufficiently accurate estimates of the posterior quantities of interest. To this end, we compared the algorithms in terms of their sampling efficiency, as measured by effective sample size (ESS) per time unit, as well as their potential scale reduction factor $\hat{R}$. We computed the minimum ESS and maximum $\hat{R}$ over the parameters of interest, using the definitions of \citet{vehtariRankNormalizationFoldingLocalization2021}. In particular, we distinguish between bulk-ESS for estimating quantities in the bulk of the posterior (e.g., posterior means) and tail-ESS for estimating quantities in the tails of the posterior (e.g., posterior intervals). 

We confirmed for each example that all algorithms were targeting the correct posterior distribution, by inspecting convergence diagnostics and tables of posterior means for all parameters. All algorithms were run with four parallel chains. The length of the warm-up phases for each algorithm and condition were chosen through initial pilot runs, and the total number of iterations was set long enough to ensure reliable estimation of the effective sampling rate and a sufficiently high bulk-ESS to ensure that the algorithms converged towards the correct posterior. Further details regarding sampler diagnostics, as well as additional tables and plots, can be found in Online Resource 2.

\subsection{Bernoulli Response Five-Indicator AR(1) Models}
\label{sec:five-ind-ar1}

We first simulated data from a two-level DSEM with $U = 5$ binary indicators, a between-participant latent trait and a within-participant latent trait. There was no systematic between-timepoint variation, so $\bm{y}_{3,t}^{*}=\bm{0}$. The between-participants model was
\begin{equation}
    \label{eq:example1-between-model}
        \bm{y}_{2,i}^{*} = \bm{\nu} + \bm{\lambda}_{2} \eta_{2,i}, \quad 
        \eta_{2,i}  \sim \mathcal{N}(0, \psi_{2}^{2}), 
\end{equation}
where the variance of the between-level latent trait was $\psi_{2}^{2}=0.5$ and the intercepts were $\bm{\nu} = (-1, -0.5, 0, 0.5, 1)^{T}$. The within-level model was a first-order autoregressive (AR(1)) process for the latent trait combined with a reflective measurement model for the linear predictor,
\begin{equation}
\label{eq:example1-within-model}
    \begin{aligned}
    \bm{y}_{1,it}^{*} &= \bm{\lambda}_{1,i} \eta_{1,it} \\
    \eta_{1,it} &= \phi_{1,i} \eta_{1,i,t-1} + \xi_{1,it}, \quad \xi_{1,it} \sim \mathcal{N}(0, \psi_{1,i}^{2}).
\end{aligned}
\end{equation}
In \eqref{eq:example1-between-model} and \eqref{eq:example1-within-model}, $\bm{\lambda}_{1,i}$ and $\bm{\lambda}_{2}$ are factor loading vectors, whose first elements were fixed to unity for identifiability, $\lambda_{1,i1}=\lambda_{2,1}=1$, and $\phi_{1,i}$ is the autoregression coefficient. 

We generated data randomly from two different variants of \eqref{eq:example1-between-model}--\eqref{eq:example1-within-model} and compared the efficiency of the proposed hybrid NUTS-Gibbs sampler to alternative algorithms, as described next.

\subsubsection{Participant-Invariant Dynamics}

First, we let the dynamics be participant-invariant, by fixing $\phi=0.4$ and $\psi_{1}^{2} = 1-\phi^{2} = 0.84$, ensuring that the marginal variance of the within-level state is exactly $\psi_{1}^{2} / (1-\phi^{2}) = 1$ \citep[Ch. 5.6.4]{durbinTimeSeriesAnalysis2012}. We also fixed the factor loadings to common values by sampling $\lambda_{1,j}$ and $\lambda_{2,j}$ ($j=2,\dots,5$) uniformly between 0.6 and 1.2 for each generated dataset. In this section, we omit $i$ subscripts for all parameters which did not vary between participants. The responses were Bernoulli distributed with either a probit link or a logit link.

We analyzed the data with a model that also treated the dynamics as participant-invariant. This way, it was possible to define a pure Gibbs sampler with no Metropolis step, and hence no need to manually adjust proposal distributions. The Gibbs sampler was implemented in \textsc{JAGS} and \textsc{NumPyro}, for the probit link only, using the truncated normal latent response formulation \eqref{eq:ProbitGibbs}. We limited the Gibbs sampler implementations to the probit link because, while it is in principle straightforward to implement Pólya-Gamma distributed latent responses in the Gibbs sampler, we are not aware of any publications in the literature discussing this, and the purpose of this paper is not to develop new pure Gibbs samplers. To ensure conjugacy we used the priors $\phi \sim \mathcal{TN}_{[-0.999, 0.999]}(0, 1)$, $\psi_{1}^{2} \sim \mathcal{IG}(0.01, 0.01)$, and $\psi_{2}^{2} \sim \mathcal{IG}(0.01, 0.01)$, where $\mathcal{IG}(a, b)$ denotes an inverse gamma distribution with shape $a$ and scale $b$. We also implemented a pure NUTS algorithm in both \textsc{NumPyro} and \textsc{Stan}, which used the observed responses directly and sampled over all latent states. Both NUTS and the proposed hybrid NUTS-Gibbs sampler used the prior $\tanh^{-1}(\phi) \sim \mathcal{N}(0, 1)$ to ensure stationarity of the AR(1) process \eqref{eq:example1-within-model}, and log-normal priors $\log \psi_{1}^{2} \sim \mathcal{N}(0, 1)$ and $\log \psi_{2}^{2} \sim \mathcal{N}(0, 1)$ to ensure positive variances. For the remaining parameters all algorithms used the priors $\bm{\nu} \sim \mathcal{N}(\bm{0}, 4 \cdot \bm{I})$, and $\lambda_{1,j} \sim \mathcal{N}(1, 0.25)$ and $\lambda_{2,j} \sim \mathcal{N}(1, 0.25 )$ for $j=2,\dots,5$. Due to the different priors the Gibbs samplers targeted a slightly different posterior than the other algorithms, but we confirmed that all implementations gave posterior means close to the data-generating values. 

In the first experiment, we kept the number of participants fixed at $N=20$ while gradually increasing the number of timepoints to $T=50,200,500,2500$. Here and in the remaining simulations, five datasets were generated for each condition. Due to long run times, the \textsc{JAGS} and \textsc{Stan} implementations were only run up to $T=200$, while only the hybrid sampler was run with $T=2500$. For the remaining simulations, we needed to impose similar limitations due to long run times, which the plots clearly highlight below.

Figure \ref{fig:example1-efficiency} (left) compares the bulk efficiency of the algorithms with the probit link. It shows that the hybrid sampler dominates the other algorithms for any number of timepoints. Its efficiency gain compared to the second best algorithm (the \textsc{NumPyro} NUTS implementation) was $1.7$ with $T=50$, $3.2$ with $T=200$, $2.4$ with $T=500$, and $2.3$ with $T=2500$. With the logit link, the performance gain of the hybrid sampler was even stronger: in this case, it was $6.9$ times more efficient than NUTS with $T=50$, $3.2$ times with $T=200$, and $2.6$ times with $T=500$. This means a model with $N=20$ and $T=500$ runs in $10.8$ instead of $26.0$ minutes with a probit link and $7.1$ instead of $18.5$ minutes with a logit link.

In the second experiment, we kept the number of timepoints fixed at $T=50$ and increased the number of participants to $N = 50, 100, 200, 500$. Again, the proposed hybrid sampler dominated all other algorithms in terms of efficiency, as shown in Figure \ref{fig:example1-efficiency} (right). In this case, however, the Gibbs sampler implemented in \textsc{NumPyro} was second best in all cases. The efficiency gain of the hybrid sampler was $1.9$ for all $N$. With the logit link, for which no Gibbs samplers were implemented, the \textsc{NumPyro} NUTS implementation was always the second best algorithm. The hybrid sampler was $10.2$ times more efficient with $N=50$, $14.8$ times more efficient with $N=100$, and $16.6$ times more efficient with $N=200$. This means a model with $N=200$ and $T=50$ runs in $2.6$ instead of $4.8$ minutes with a probit link and $52$ seconds instead of $14.4$ minutes with a logit link.

It is also interesting to note from Figure \ref{fig:example1-efficiency} that \textsc{NumPyro} is consistently more efficient than \textsc{Stan}, similar to what has been observed by others \citep{phanComposableEffectsFlexible2019}. The difference in efficiency between the \textsc{NumPyro} and \textsc{JAGS} implementations of the Gibbs sampler is even more pronounced.

\begin{figure}
    \includegraphics[width=\textwidth]{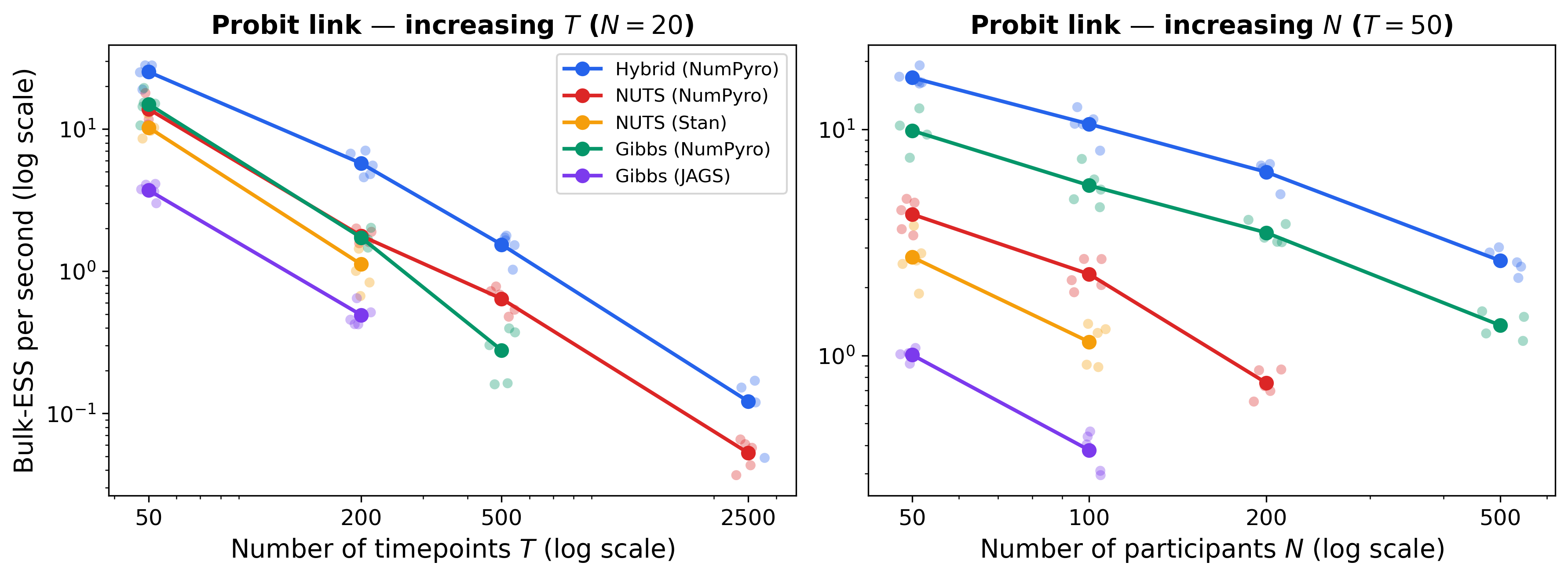}
    \caption{Bulk efficiency for the five-indicator AR(1) model with participant-invariant dynamics. Means across five runs are shown, together with results from each individual run as semi-transparent points}
    \label{fig:example1-efficiency}    
\end{figure}

\subsubsection{Participant-Varying Dynamics}
\label{sec:example2-bernoulli-participant-varying}

We next allowed the autoregressive coefficient, the within-level variance, and the factor loadings for the within-level latent traits, to vary between participants, following the priors
\begin{equation*}
        \tanh^{-1}{(\phi_{i})} \sim \mathcal{N}(\mu_{\phi}, \omega_{\phi}^2), \quad 
    \log(\psi_{1,i}^{2}) \sim \mathcal{N}(\mu_{\psi}, \omega_{\psi}^2), \quad \text{and} \quad \lambda_{1,ij} \sim \mathcal{N}(\mu_{\lambda,j}, \omega_{\lambda}^2), ~ j=2,\dots,5.
\end{equation*}
In the data generation, we sampled parameter values from these priors while setting the population-level means equal to the values used for the model with participant-invariant dynamics, i.e., $\mu_{\phi}=\tanh^{-1}(0.4)$, $\mu_{\psi} = \log( 0.84)$, and $\mu_{\lambda,j} \sim \mathcal{U}(-0.6, 1.2)$ for $j=2,\dots,5$ with $\mathcal{U}(a,b)$ denoting the uniform distribution on $[a, b]$. Note that in the general DSEM formulation of equations \eqref{eq:BetweenTimepointModel}--\eqref{eq:WithinLevelModel}, all parameters varying between participants are part of the vector $\bm{\eta}_{2,i}$. Hence, the variable $\eta_{2,i}$ in \eqref{eq:example1-between-model} is actually an element of a larger vector, but for ease of presentation we do not write this out fully.

Since this model does not allow pure Gibbs sampling, we implemented the hybrid NUTS-Gibbs sampler in \textsc{NumPyro} and pure NUTS in \textsc{NumPyro} and \textsc{Stan}. For $\mu_{\phi}$, $\mu_{\psi}$, $\mu_{\lambda,j}$, and $\lambda_{2,j}$ ($j=2,\dots,5$) we used standard normal priors, and for $\omega_{\phi}^{2}$, $\omega_{\psi}^{2}$, and $\omega_{\lambda}^{2}$ we used standard normal priors truncated to the positive real line. All implementations used a non-centered parametrization \citep{papaspiliopoulosGeneralFrameworkParametrization2007}.

We used the same set-up as we did for the model with participant-invariant dynamics. First, we fixed $N=20$ and set the number of timepoints to $T = 50, 200, 500$. Figure \ref{fig:example2-efficiency} (left) shows the results with the logit link. Here the pure NUTS sampler was more efficient than the hybrid sampler, having efficiency gain $1.11$ with $T=50$, $1.25$ with $T=200$, and $2.0$ with $T=500$. In contrast, when we fixed $T=50$ and increased $N$, the hybrid sampler was more efficient, with an efficiency gain between $1.5$ and $1.7$. The results were similar for tail efficiency and with the probit link, as shown in Online Resource 2.

\begin{figure}
    \centering
    \includegraphics[width=\linewidth]{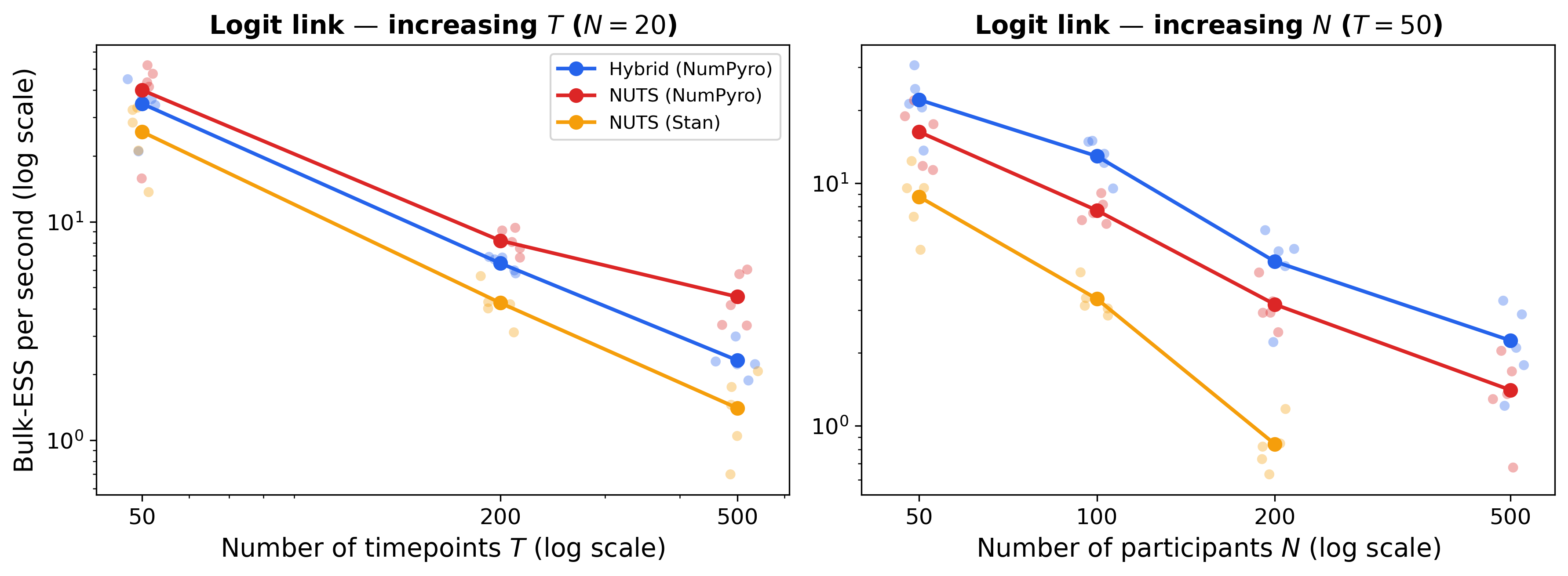}
    \caption{Bulk efficiency for the five-indicator AR(1) model with participant-varying dynamics}
    \label{fig:example2-efficiency}
\end{figure}

\subsection{Multinomial Response Nine-Indicator VAR(1) Model}
\label{sec:example4-var}

We next considered a first-order vector autoregressive (VAR(1)) model with three-dimensional latent states $\bm{\eta}_{1,it}$ and $\bm{\eta}_{2,i}$ and nine indicators. Again there was no systematic variation between timepoints, so $\bm{y}_{3,t}^{*} = \bm{0}$. The between-participants model was
\begin{align*}
    \bm{y}_{2,i}^{*} = \bm{\nu} + \bm{\Lambda}_{2} \bm{\eta}_{2,i}, \quad \bm{\eta}_{2,i}  \sim \mathcal{N}(\bm{0}, \bm{\Psi}_{2,i}),
\end{align*}
and the within-level model was
\begin{align*}
    \bm{y}_{1,it}^{*} &= \bm{\Lambda}_{1,i} \bm{\eta}_{1,it}  \\
    \bm{\eta}_{1,it} &= \bm{\Phi}_{1,i} \bm{\eta}_{1,i,t-1} + \bm{\xi}_{1,it}, \quad \bm{\xi}_{1,it} \sim \mathcal{N}(\bm{0}, \bm{\Psi}_{1,i}).
\end{align*}
The factor loading matrices were
\begin{equation*}
    \bm{\Lambda}_{1,i} = 
    \begin{bmatrix}
    1 & 0 & 0 \\
    \lambda_{1,i1} & 0 & 0 \\
    \lambda_{1,i2} & 0 & 0 \\
    0 & 1 & 0 \\
    0 & \lambda_{1,i3} & 0 \\
    0 & \lambda_{1,i4} & 0 \\
    0 & 0 & 1 \\
    0 & 0 & \lambda_{1,i5} \\
    0 & 0 & \lambda_{1,i6}
    \end{bmatrix}
    \quad \text{and} \quad 
    \bm{\Lambda}_{2} = 
    \begin{bmatrix}
    1 & 0 & 0 \\
    \lambda_{2,1} & 0 & 0 \\
    \lambda_{2,2} & 0 & 0 \\
    0 & 1 & 0 \\
    0 & \lambda_{2,3} & 0 \\
    0 & \lambda_{2,4} & 0 \\
    0 & 0 & 1 \\
    0 & 0 & \lambda_{2,5} \\
    0 & 0 & \lambda_{2,6}
    \end{bmatrix},
\end{equation*}
with $\bm{\Lambda}_{1,i}$ varying between participants but $\bm{\Lambda}_{2}$ being constant in the population. The elements of $\bm{\Lambda}_{1,i}$ and $\bm{\Lambda}_{2}$ were generated in exactly the same way as for the AR(1) model in Section \ref{sec:example2-bernoulli-participant-varying}. The between-level covariance matrix and the population mean of the autoregressive matrix took the values
\begin{equation*}
    \bm{\Psi}_{2} = 
    0.7 \cdot \begin{bmatrix}
        1 & 0.3 & 0.2 \\
        0.3 & 1 & 0.3 \\
        0.2 & 0.3 & 1
    \end{bmatrix}
    \quad 
    \text{and}
    \quad
    \bm{\mu}_{\bm{\Phi}} = 
    \begin{bmatrix}
        0.3 & 0.1 & 0.1 \\
        0.1 & 0.3 & 0.1 \\
        0.1 & 0.1 & 0.3
    \end{bmatrix}.
\end{equation*}
The individual autocorrelation matrices $\bm{\Phi}_{1,i}$ were then generated by adding zero-mean normally distributed deviations with variance $0.1$ independently around each element of $\bm{\mu}_{\bm{\Phi}}$. To avoid accidentally generating non-stationary processes, for each $\bm{\Phi}_{1,i}$ we checked if the largest absolute eigenvalue $\rho_{i}$ was larger than $0.95$.\footnote{Processes with $\rho_{i} \geq 1$ are non-stationary \citep[Ch. 2.2]{kilianStructuralVectorAutoregressive2017}.} If it was, the scaling $\bm{\Phi}_{1,i} \gets (0.95 / \rho_{i}) \bm{\Phi}_{1,i}$ was applied to strictly constrain the maximum absolute eigenvalue to $0.95$.

The individual process noise covariance matrices $\bm{\Psi}_{1,i}$ were generated by first defining the $3 \times 3$ Cholesky factors $\bm{L}_{1,i}$, whose diagonal elements were sampled log-normally around a location parameter corresponding to a variance of $0.5$,
\begin{equation*}
    \text{diag}(\bm{L}_{1,i}) \sim \mathcal{N}\left\{\log \left[\sqrt{0.5}, \sqrt{0.5}, \sqrt{0.5}\right]^{T}, 0.1 \cdot \bm{I}\right\}.
\end{equation*}
The off-diagonal elements of $\bm{L}_{1,i}$ were sampled from a normal distribution centered at zero with variance $0.1$. The final matrix was then constructed as $\bm{\Psi}_{1,i} = \bm{L}_{1,i} \bm{L}_{1,i}^{T}$.

The observed counts were generated from a binomial distribution using a logit link, such that $y_{itj} \sim \mathcal{B}\{n_j, \text{logit}^{-1}(y_{itj}^{*})\}$, with the number of trials $n_{j}$ being elements of the vector $\bm{n} = (3,3,3,1,1,1,9,9,9)^{T}$. We did not include a probit link here, since data augmentation then would require one latent response per trial $n_{j}$, which would drastically increase the memory requirements and the dimensionality of the parameter space.

We implemented the hybrid NUTS-Gibbs sampler and a pure NUTS sampler in \textsc{NumPyro}. A NUTS algorithm implemented in \textsc{Stan} was also tested, but discarded due to excessive run times. Again, no pure Gibbs sampler can be defined due to lack of conjugacy. The prior distributions were multivariate extensions of those used in the previous section, and are listed in Online Resource 2.

This is a challenging, high-dimensional model: in addition to the large number of parameters varying between participants which are common to both algorithms, the hybrid sampler needs to sample $18 NT$ parameters $\{\omega_{itj}, \tilde{y}_{itj}\}$ in each iteration while NUTS has to sample $3 N T$ latent variables $\eta_{1,itj}$. Note the important difference, however, that while the hybrid sampler obtains these parameters from conditional distributions, NUTS has to repeatedly compute the derivative of the log posterior with respect to $\mathcal{O}(N \cdot T)$ parameters in order to simulate Hamiltonian dynamics with the leapfrog integrator. In contrast, the hybrid sampler only simulates Hamiltonian dynamics for $\mathcal{O}(N+T)$ parameters.

Based on several pilot runs, we found that data with $N=20$ participants and $T=50$ timepoints were feasible to analyze using four parallel chains on the same MacBook Pro as used in the preceding experiments. We ran 4,000 iterations of each chain, discarding the first 1,000 as warm-up. Table \ref{tab:example4-summary} shows the average results over five generated datasets. The hybrid sampler is $5.1$ times more efficient in terms of bulk-efficiency and $4.7$ times more efficient in terms of tail efficiency. Assuming the algorithms would have similar efficiency if they kept running, it would take about $2.8$ hours to obtain a bulk-ESS of 1,000 with NUTS, compared to 33 minutes with the hybrid sampler.

\begin{table}[htpb]
    \centering
    \caption{Benchmark results for the nine-indicator VAR(1) model with binomial responses. Averages over 5 runs}
    \label{tab:example4-summary}
    \begin{tabular}{lrrrrrr}
        \toprule
        \textbf{Algorithm} & \textbf{Wall time} & \textbf{Bulk-ESS} & \textbf{Tail-ESS} & \textbf{Bulk-ESS/s} & \textbf{Tail-ESS/s} & $\hat{\boldsymbol{R}}$ \\
        \midrule
        \textbf{Hybrid} & 30 min & 846 & 1355 & 0.51 & 0.84 & 1.006 \\
        \textbf{NUTS} & 25 min & 153 & 260 & 0.10 & 0.18 & 1.036 \\
        \bottomrule
    \end{tabular}
\end{table}

\section{Application Example: Predicting Panic Attacks from Physical Activity}
\label{sec:ApplicationExample}

We now present an application example using a dataset from \citet{jangDigitalPhenotypingDataset2024}, downloaded from the OpenESM database \citep{siepeIntroducingOpenESMDatabase2025}. The study followed 43 patients diagnosed with mental health disorders over a period of 402 days (i.e., a low $N$, high $T$ data set). The patients were asked to complete a questionnaire on a smartphone app once per day, in the evening. The average number of observations per participant was 92, varying from a minimum of 1 to a maximum of 327. 

One of the collected variables was a binary indicator reporting whether the participant had experienced a panic attack on the given day. We denote this variable by $y_{it1}$. We are interested in investigating the extent to which such a panic attack can be predicted by daily mean heart rate ($y_{it2}$), heart rate variance ($y_{it3}$), and daily step count ($y_{it4}$). The missingness pattern in the data is shown in Figure \ref{fig:missingness-pattern}. Two participants had only a single observation, and were removed from the analyses.

\begin{figure}
    \centering
    \includegraphics[width=\linewidth]{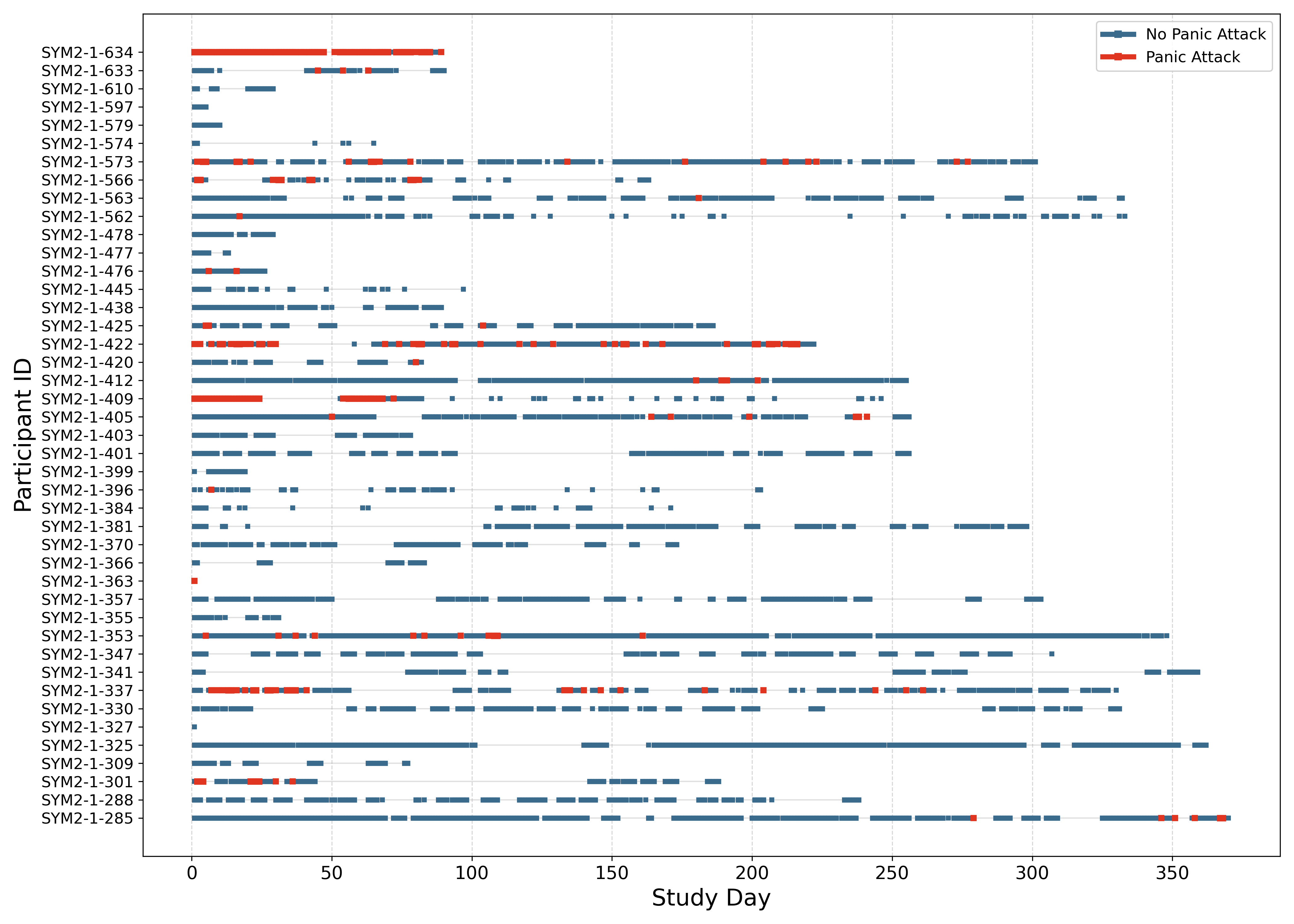}
    \caption{Missingness pattern in the EMA data. The presence of panic attacks is indicated in red}
    \label{fig:missingness-pattern}
\end{figure}

No systematic variation between timepoints was assumed, so $\bm{y}_{3,t}^{*} = \bm{0}$. The between-participants model was
\begin{align}
\label{eq:ApplicationBetween}
    \bm{y}_{2,i}^{*} = \bm{\nu} + \bm{\eta}_{2,i}, \quad \bm{\eta}_{2,i} \sim \mathcal{N}(\bm{0}, \bm{\Psi}_{2}),
\end{align}
where $\bm{\nu}$ is an intercept vector and $\bm{\Psi}_{2}$ a diagonal covariance matrix. The within-participants model was
\begin{equation*}
    \begin{aligned}
    \bm{y}_{1,it}^{*} &= \bm{\eta}_{1,it} \\    
    \bm{\eta}_{1,it}
    &=
    \bm{\Phi}_{1}
    \bm{\eta}_{1,i,t-1}
    +
    \bm{\xi}_{1,it},
    \quad \bm{\xi}_{1,it} \sim \mathcal{N}(\bm{0}, \bm{\Psi}_{1}).
\end{aligned}
\end{equation*}

The panic attack measurements were modeled with a Bernoulli distribution with mean 
\begin{equation*}
    \mu_{it1} = H(y_{it1}^{*}) = H(y_{1,it1}^{*} + y_{2,i1}^{*}) = H(\nu_{1} + \eta_{2,i1} + \eta_{1,it1}),
\end{equation*}
where $H(\cdot)$ is the inverse logit function. We assumed the continuous measurements were conditionally normally distributed, so $y_{itj} = y_{itj}^{*} + \epsilon_{itj}$ with $\epsilon_{itj} \sim \mathcal{N}(0, \sigma_{j}^{2})$ for $j=2,3,4$. The process covariance matrix $\bm{\Psi}_{1}$ was diagonal and the autoregression matrix was
\begin{equation*}
\bm{\Phi}_{1} = 
    \begin{bmatrix}
        \phi_{11} & \phi_{12} & \phi_{13} & \phi_{14} \\
        0 & \phi_{22} & 0 & 0 \\
        0 & 0 & \phi_{33} & 0 \\
        0 & 0 & 0 & \phi_{44}
    \end{bmatrix},
\end{equation*}
indicating that we assume all variables are auto-correlated, and that there are cross-lagged effects of the continuous variables on the probability of panic attacks.

The high number of missing values were conveniently handled with the Kalman filter as follows. Whenever an observation $\bm{y}_{it}$ was missing---note that this includes the binary panic attack variable and the three continuous measurements---we updated the corresponding within-level latent variables $\bm{\eta}_{1,it}$ with the prediction step \eqref{eq:KFPrediction} and set $\bm{y}_{1,it}^{*} = \bm{\eta}_{1,it}$, skipping the measurement update steps \eqref{eq:KalmanMeasurement1} and \eqref{eq:KalmanMeasurement2}. 

We ran the hybrid NUTS-Gibbs sampler for 8,000 iterations on each of four parallel chains. This took slightly less than 2 hours. The posterior summary is shown on Table \ref{tab:application-summary}, in which $\psi_{1,j}$ and $\psi_{2,j}$ denote the square roots of the diagonal elements of $\bm{\Psi}_{1}$ and $\bm{\Psi}_{2}$, respectively.

While most $\hat{R}$ values show excellent convergence, we note that the sampler struggles with identifying the intercept $\nu_{1}$ and standard deviation $\psi_{21}$ of the between-participant part of the linear predictor for panic attacks defined in \eqref{eq:ApplicationBetween}, as both of these have $\hat{R}=1.02$ and bulk-ESS 291 and 328, respectively. The same applies to the within-level process noise of the linear predictor for panic attacks, $\psi_{1,1}$, which has $\hat{R}=1.01$ and bulk-ESS 258. Note, however, that these $\hat{R}$ values are much better than the $\hat{R} \leq 1.10$ thresholds which have traditionally been suggested in the literature \citep[p. 297]{Gelmanetal2003} and used in the DSEM literature \citep{asparouhov2010bayesian,asparouhovDynamicStructuralEquation2018,mcneishExploringHowMany2025}. Considering Figure \ref{fig:missingness-pattern}---which demonstrates that panic attacks are rare and unevenly distributed---it is clear that the data contain relatively sparse information about panic attacks, which explains why these parameters are hard to identify. As expected, the overall intercept for panic attacks is very low, with posterior mean $-6.123$ and 95\% posterior intervals $[-7.836, -4.650]$. We also see that the autoregressive effects $\phi_{jj}$ ($j=1,\dots,4$) are all very high, indicating a high degree of similarity in these variables from one day to the next. Looking at the parameters of main interest, $\phi_{12}$, $\phi_{13}$, and $\phi_{14}$, we see that all contain $0$ in their posterior interval. Hence, we cannot conclude that panic attacks can be predicted by mean heart rate, heart rate variance, and step count the day before.

\begin{table}[htpb]
    \centering
    \caption{Posterior summary for the application example. The 2.5\% and 97.5\% columns show the limits of the 95\% credible interval}
    \label{tab:application-summary}
    \begin{tabular}{lrrrrrrr}
        \toprule
        \textbf{Parameter} & \textbf{Mean} & \textbf{SD} & \textbf{2.5\%} & \textbf{97.5\%} & \textbf{Bulk-ESS} & \textbf{Tail-ESS} & $\hat{\boldsymbol{R}}$ \\
        \midrule
        $\phi_{11}$ & 0.922 & 0.038 & 0.838 & 0.984 & 403 & 607 & 1.00 \\
        $\phi_{22}$ & 0.871 & 0.024 & 0.822 & 0.916 & 9444 & 13869 & 1.00 \\
        $\phi_{33}$ & 0.967 & 0.019 & 0.919 & 0.992 & 8492 & 14285 & 1.00 \\
        $\phi_{44}$ & 0.921 & 0.015 & 0.890 & 0.947 & 14530 & 17932 & 1.00 \\
        $\phi_{12}$ & 0.064 & 0.077 & -0.076 & 0.230 & 1385 & 2586 & 1.00 \\
        $\phi_{13}$ & 0.198 & 0.250 & -0.205 & 0.787 & 923 & 2044 & 1.00 \\
        $\phi_{14}$ & 0.029 & 0.074 & -0.111 & 0.185 & 2397 & 4672 & 1.00 \\
        $\nu_{1}$ & -6.123 & 0.816 & -7.836 & -4.650 & 291 & 706 & 1.02 \\
        $\nu_{2}$ & 0.033 & 0.137 & -0.235 & 0.302 & 2449 & 5724 & 1.00 \\
        $\nu_{3}$ & -0.006 & 0.099 & -0.197 & 0.190 & 3981 & 8409 & 1.00 \\
        $\nu_{4}$ & 0.036 & 0.092 & -0.143 & 0.218 & 5523 & 11259 & 1.00 \\
        $\sigma_{2}$ & 0.473 & 0.012 & 0.448 & 0.497 & 13874 & 20282 & 1.00 \\
        $\sigma_{3}$ & 0.767 & 0.010 & 0.748 & 0.785 & 31012 & 21393 & 1.00 \\
        $\sigma_{4}$ & 0.660 & 0.010 & 0.641 & 0.680 & 30048 & 21330 & 1.00 \\
        $\psi_{1,1}$ & 0.928 & 0.240 & 0.530 & 1.466 & 258 & 868 & 1.01 \\
        $\psi_{1,2}$ & 0.262 & 0.022 & 0.220 & 0.305 & 11471 & 16090 & 1.00 \\
        $\psi_{1,3}$ & 0.072 & 0.015 & 0.049 & 0.108 & 10903 & 14986 & 1.00 \\
        $\psi_{1,4}$ & 0.199 & 0.015 & 0.171 & 0.230 & 19095 & 20526 & 1.00 \\
        $\psi_{2,1}$ & 3.156 & 0.473 & 2.316 & 4.149 & 328 & 1017 & 1.02 \\
        $\psi_{2,2}$ & 0.854 & 0.105 & 0.673 & 1.085 & 5243 & 10670 & 1.00 \\
        $\psi_{2,3}$ & 0.585 & 0.074 & 0.459 & 0.748 & 6704 & 12945 & 1.00 \\
        $\psi_{2,4}$ & 0.523 & 0.072 & 0.397 & 0.681 & 8257 & 14377 & 1.00 \\
        \bottomrule
    \end{tabular}
\end{table}

\section{Limitations of the Hybrid Sampler}
\label{sec:Limitation}

While the proposed hybrid sampler in principle handles both ordinal responses and negative binomially distributed responses, its efficiency gain over pure NUTS will likely be limited to binomial data. We here explain why.

\subsection{Ordinal Responses}

While it is possible to formulate Pólya-Gamma models for ordinal data \citep{jimenezSequentialExploratoryDiagnostic2023}, the probit model \citep{mccullaghRegressionModelsOrdinal1980} is more common and convenient, so we limit our discussion to this case.

Let $y_{itj} \in \{1, 2, \dots, C\}$ ($j=1,\dots,U$) be an ordered categorical response with $C$ categories. The corresponding latent response is mapped to the observed response via a set of strictly increasing threshold parameters $\bm{\theta}_{j} = (\tau_{0,j}, \tau_{1,j}, \dots, \tau_{C,j})$, where $\tau_{0,j} = -\infty$ and $\tau_{C,j} = \infty$. We assume that
\begin{equation}
\label{eq:OrdinalResponse}
    y_{itj} = c \quad \text{if} \quad \tau_{c-1, j} < \tilde{y}_{itj} \le \tau_{c, j}, \quad i=1,\dots,N, \quad t=1,\dots, T.
\end{equation}
As pointed out by \citet{cowlesAcceleratingMonteCarlo1996} in the context of Gibbs sampling, the constraint \eqref{eq:OrdinalResponse} implies that the conditional posterior distribution of the threshold parameter $\tau_{c,j}$ is uniform over the interval between the highest latent response for which $y_{itj}=c$ and the lowest latent response for which $y_{itj}=c+1$, 
\begin{equation}
\label{eq:OrdinalConstraint}
    \left[ \text{max}_{i,t}\left(\tilde{y}_{itj} : y_{itj} = c \right), 
    \text{min}_{i,t}\left(\tilde{y}_{itj} : y_{itj} = c+1\right) \right], \quad i=1,\dots,N, \quad  t=1,\dots,T.
\end{equation}
In the NUTS part of the hybrid sampler, where $\tau_{c,j}$ is updated, the exact likelihood contribution of the latent responses is an indicator function which is one if $\tau_{c,j}$ satisfies \eqref{eq:OrdinalConstraint} and zero otherwise. With a large number of observations, as is the case with intensive longitudinal data, the interval will be very narrow, and NUTS will be forced to take extremely small steps to reach its target acceptance probability. Consequently, it will be very inefficient. 

Pure NUTS, on the other hand, avoids the whole problem because it does not use a latent response formulation. For the five-indicator AR(1) model introduced in Section \ref{sec:five-ind-ar1} we confirmed this empirically by introducing ordinal responses with $C=3$ categories. Table \ref{tab:example3-summary} shows diagnostics averaged over five runs with $N=20$, $T=50$, and otherwise the same parametrization as in Section \ref{sec:example2-bernoulli-participant-varying}. The number of iterations was 5,000 for both the hybrid sampler and NUTS, and the first 1,000 iterations were discarded as burn-in. In this case NUTS' performance is excellent, whereas the hybrid sampler is very inefficient. The same pattern was shown in experiments with $C=5$ and $C=7$ categories.

\begin{table}[htpb]
    \centering
    \caption{Benchmark results for the ordinal outcome AR(1) model with participant-varying dynamics. Averages over 5 runs}
    \label{tab:example3-summary}
    \begin{tabular}{lrrrrrr}
        \toprule
        \textbf{Algorithm} & \textbf{Wall time (s)} & \textbf{Bulk-ESS} & \textbf{Tail-ESS} & \textbf{Bulk-ESS/s} & \textbf{Tail-ESS/s} & $\hat{\boldsymbol{R}}$ \\
        \midrule
        \textbf{Hybrid (NumPyro)} & 181.7 & 5 & 11 & 0.03 & 0.07 & 2.880 \\
        \textbf{NUTS (NumPyro)} & 284.9 & 2265 & 3197 & 7.75 & 11.78 & 1.000 \\
        \bottomrule
    \end{tabular}
\end{table}

This issue cannot be resolved by updating $\tau_{c,j}$ in the Gibbs step of the hybrid sampler, as this again will lead to highly auto-correlated chains. Following \citet{cowlesAcceleratingMonteCarlo1996} one could instead jointly propose latent responses and their corresponding thresholds in a Metropolis step. However, Metropolis steps require manual tuning of proposal distributions, making automated inference challenging. We conclude that developing scalable algorithms for DSEMs with ordinal data remains an open challenge.

\subsection{Negative Binomial Responses}

Negative binomial models are convenient for representing overdispersed count data, e.g., when the variance of Poisson distributed counts is larger than the rate $\lambda$ \citep{gardnerRegressionAnalysesCounts1995}. Letting $y$ be the count, $r$ the number of failures (the dispersion parameter), and $p$ the success probability, the negative binomial likelihood can be represented as a scale mixture of Gaussians with respect to a latent Pólya-Gamma random variable $\omega \sim \mathcal{PG}\{y+r, y^{*}\}$ where $y^{*}$ is the linear predictor \citep{polsonBayesianInferenceLogistic2013}. This introduces a new parameter $r$ which needs to be updated in the NUTS step of the hybrid sampler. 

To efficiently sample from $\mathcal{PG}\{y+r, y^{*}\}$, the method proposed by \citet{polsonBayesianInferenceLogistic2013} draws independent $\mathcal{PG}\{1, y^{*}\}$ variables and sums them together. This additive property requires the shape parameter $y+r$ to be an integer, which in practice means $r$ must be an integer since $y$ is already a count. However, NUTS cannot deal with discrete parameters due to how it simulates Hamiltonian dynamics over continuous spaces. Hence, the proposed hybrid sampler would not be able to estimate DSEMs with negative binomial responses efficiently. An approximate solution to the problem could be to sample $\omega$ directly from a truncated version of the infinite sum \eqref{eq:PolyaGammaSum} defining the Pólya-Gamma distribution rather than relying on the rejection sampler.

\section{Discussion}
\label{sec:Discussion}

We have presented a hybrid NUTS-Gibbs sampler for DSEMs with binomial responses. It naturally supports both the probit link through the use of latent Gaussian responses and the logit link through the use of Pólya-Gamma auxiliary variables. Two key features make the algorithm efficient for estimating DSEMs. First, by analytically integrating over the within-level latent variables using the Kalman filter, it allows NUTS to target a marginal posterior with $\mathcal{O}(N+T)$ parameters, rather than $\mathcal{O}(N \cdot T)$ as would otherwise be the case. This way, the algorithm retains all the well-known advantages of NUTS \citep{betancourtConceptualIntroductionHamiltonian2018} for the high-dimensional parameter space consisting of participant-level parameters, timepoint-level parameters, and population-level parameters. Second, the Gibbs step in which the $\mathcal{O}(N \cdot T)$ latent response variables are sampled, is very efficient. In particular, all participants can be treated completely independently, making this an ideal case for parallelization when the number of participants is large. Furthermore, by confining Gibbs sampling to latent responses only, we avoid the high autocorrelation and convergence problems typical in Gibbs samplers for high-dimensional models \citep{parkImprovingGibbsSampler2022}. 

Our simulation experiments confirmed these theoretical expectations. For the AR(1) model with participant-invariant dynamics (Figure \ref{fig:example1-efficiency}), the hybrid sampler was more efficient than all competing algorithms for all conditions considered, and it made DSEM estimation feasible both in the case of a very high number of timepoints ($N=20$ and $T=2500$) and in the case of a very high number of participants ($N=500$ and $T=50$). For the AR(1) model with participant-varying dynamics, a Gibbs sampler was not possible due to lack of conjugacy. In this case, pure NUTS was more efficient than the hybrid sampler in the $N<T$ case (Figure \ref{fig:example2-efficiency}, left), while the hybrid sampler was more efficient in the $N>T$ case (Figure \ref{fig:example2-efficiency}, right). This result matches well with the scaling discussed above: since both the Kalman filter inside the NUTS step of the hybrid sampler as well as the Gibbs step can be performed independently across participants, but not across timepoints, we expect the hybrid sampler to be particularly competitive in the large $N$ case. Even when running four chains in parallel, \textsc{NumPyro} uses the XLA compiler to efficiently vectorize these operations across the $N$ participants in each chain.

In the more challenging simulation experiment with a VAR(1) model with three latent traits and participant-varying dynamics, the hybrid sampler's efficiency gain compared to pure NUTS was even more pronounced (Table \ref{tab:example4-summary}). We expect this to be the case also for other DSEMs with more than a single latent trait, due to how the hybrid sampler always marginalizes out these traits in the NUTS step, whereas pure NUTS always has to sample them. Testing this in actual applications, along with large-scale simulation studies, is an important avenue for further research.

Making DSEMs computationally feasible for even bigger data than what was considered here, likely requires the use of graphical processing units (GPUs). GPUs are optimized for highly parallelizable linear algebra, of which the Kalman filter operations in the hybrid sampler are an excellent example. In contrast, in pure NUTS the leapfrog integrator would introduce a bottleneck given that Hamiltonian dynamics for the $\mathcal{O}(N \cdot T)$ parameters would need to be simulated sequentially. Implementing this algorithm in \textsc{NumPyro} is also an advantage in this context. Because \textsc{NumPyro} is powered by the JAX library and the XLA compiler, the exact same model code can be executed on standard central processing units (CPUs), GPUs, or tensor processing units (TPUs) \citep{jouppiInDatacenterPerformanceAnalysis2017} simply by changing the backend execution target. This allows researchers to seamlessly transition from prototyping DSEMs on personal computers to analyzing massive datasets on high-performance computing clusters without having to rewrite any underlying code. Furthermore, the temporal parallelization algorithm of \citet{sarkkaTemporalParallelizationBayesian2021} could potentially reduce the time required for running the Kalman filter for each individual participant from $\mathcal{O}(T)$ for the algorithm outlined in Section \ref{sec:KalmanFilterFFBS} to $\mathcal{O}(\log T)$.

Sampling from the Pólya-Gamma distribution, however, poses a potential challenge to GPU acceleration. As mentioned, the rejection sampler proposed by \citet{polsonBayesianInferenceLogistic2013} and used in this paper has an acceptance probability which is lower-bounded by $0.99919$. If we conservatively assume that the actual probability equals this lower bound, for a model with $U$ latent responses all binomially distributed with the logit link, this means that the probability of experiencing at least one rejection when processing a single participant in a single iteration of the Gibbs sampler is $1 - 0.99919^{U T}$. For example, with $U=3$ and $T=100$, the probability is $0.22$. While this poses no problem for CPUs, GPUs are most efficient when the parallel operations take about the same amount of time to complete. Rejections, on the other hand, require new proposals to be sampled, and since rejections will happen to an unequal extent across participants, this causes branch divergence, forcing accepted threads to sit idle and wait while rejected threads resample. The extent to which this problem is a real bottleneck will await empirical evaluation. However, one potential solution would be to approximate the Pólya-Gamma distribution by truncating the infinite sum \eqref{eq:PolyaGammaSum} defining its density at some high number of terms. 

The hybrid NUTS-Gibbs sampler proposed in this paper, along with the NUTS-Kalman algorithm of \citet{sorensenEfficientBayesianEstimation2026}, makes DSEM estimation with big data and complex models considerably more feasible than existing algorithms, as long as the responses are either Bernoulli, binomially or normally distributed. Many other response types are of high practical importance, however. Due to the widespread use of Likert scales in psychological assessment, analysis of ordinal data is particularly relevant \citep{jimenezSequentialExploratoryDiagnostic2023}, but count data is also common, e.g., from ecological momentary assessments of E-cigarette use \citep{buuEcologicalMomentaryAssessment2023} or of the number of alcohol-related problems which occurred during a day \citep{gaherExperienceSamplingStudy2014}. Properly modeling such data requires extending the Kalman filter marginalization approach to general exponential family responses, as well as to the negative binomial distribution. As we pointed out in Section \ref{sec:Limitation}, while the hybrid sampler in principle allows ordinal or negative binomial responses, it is inefficient in these cases. The same applies to pure NUTS algorithms. Keeping the algorithm consistent while marginalizing over latent states likely requires replacing the Kalman filter with particle filters, as in pseudo-marginal MCMC \citep{andrieuParticleMarkovChain2010}. Unfortunately, such algorithms are known to be extremely slow. We believe a more viable approach is to extend the NUTS-Kalman algorithm of \citet{sorensenEfficientBayesianEstimation2026} such that it approximately marginalizes over latent states using an unscented Kalman filter \citep{julierNewExtensionKalman1997}. An important question when comparing such an approximate algorithm to an alternative which targets the correct posterior is how accurate estimates of the posterior distribution one can obtain with a given compute budget. This is an important question for further research in order to maximize the usability of these samplers in applied research contexts.

\section*{Declarations}
\noindent \textbf{Funding Statement} \\
E.M.M. was supported by Grant 2023-1510-00 from the Jacobs Foundation.

\vspace{0.5cm}

\noindent \textbf{Competing Interests} \\
The authors have no competing interests to declare.

\vspace{0.5cm}

\noindent \textbf{Author Contributions} \\
Conceptualization: Ø.S., E.M.M.; Methodology: Ø.S., E.M.M.; Formal analysis and investigation: Ø.S.; Writing - original draft preparation: Ø.S.; Writing - review and editing: Ø.S., E.M.M.

\vspace{0.5cm}

\noindent \textbf{Data Availability Statement} \\

Complete source code for reproducing all results in this study is available in our OSF repository, \url{https://osf.io/ds3hp}.

\vspace{0.5cm}

\noindent \textbf{Acknowledgments} \\

The authors declare the use of Google's Gemini (specifically the Gemini 3.1 Pro model and Gemini Deep Think, accessed via the web interface at gemini.google.com) and Claude Opus 4.6. These tools were utilized between January 2026 and March 2026 to assist in optimizing the \textsc{Python} code. All code suggested by the tool was independently verified, edited, and refined by the human authors, who take full accountability for the manuscript's contents.

\printbibliography

\clearpage
\setcounter{table}{0}
\renewcommand{\thetable}{S\arabic{table}}
\setcounter{figure}{0}
\renewcommand{\thefigure}{S\arabic{figure}}
\setcounter{section}{0}
\renewcommand{\thesection}{S\arabic{section}}
\setcounter{equation}{0}
\renewcommand{\theequation}{S\arabic{equation}}

\begin{center}
\LARGE\textbf{Online Resource 1}
\end{center}

This document derives the adjustment of Theorem 1 in \citet{sorensenEfficientBayesianEstimation2026} to the within-level model defined in terms of the centered linear predictor $\bm{y}_{1,it}^{*}$.

\begin{definition}[Strictly Lagged Polynomial Matrices]
    \label{def:strictly-lagged}
    For any polynomial matrix $\bm{P}(L) = \sum_{l=0}^{L} \bm{P}_{l} L^{l}$, we define the strictly lagged polynomial matrix $\bm{P}^{*}(L)$ as the polynomial obtained by removing the contemporaneous (zero-lag) coefficient $\bm{P}_{0}$:
    \begin{equation}
        \bm{P}^{*}(L) = \bm{P}(L) - \bm{P}_{0} = \sum_{l=1}^{L} \bm{P}_{l} L^{l}.
    \end{equation}
\end{definition}

\begin{definition}[Coefficient Extraction]
    \label{def:coeff-extraction}
    Any time-varying polynomial operator $\bm{\Phi}_{it}(L)$ can be uniquely written in the left-canonical form $\bm{\Phi}_{it}(L) = \sum_{k} \bm{D}_{k,it} L^k$. We define the extraction operator $[\cdot]_k$ such that $[\bm{\Phi}_{it}(L)]_k \equiv \bm{D}_{k,it}$.
\end{definition}

The adjusted version of the theorem is:

\begin{theorem}
    \label{th:Theorem1_modified}
    For maximum lag $L \ge 1$, the within-level model linking the latent states to the linear predictor is exactly equivalent to the state space transition model
    \begin{equation}
        \label{eq:StateSpaceModelTheorem1}
        \tilde{\bm{\eta}}_{1,i,t+1} = \bm{T}_{it} \tilde{\bm{\eta}}_{1,it} + \bm{c}_{it} + \bm{w}_{it}, \quad \bm{w}_{it} \sim \mathcal{N}\left(\bm{0}, \bm{W}_{it}\right),
    \end{equation}
    where the augmented state vector tracks the latent states and continuous linear predictors, $\tilde{\bm{\eta}}_{1,it} = [\bm{\eta}_{1,it}^{T}, \dots, \bm{\eta}_{1,i,t-L+1}^{T}, (\bm{y}_{1,it}^{*})^{T}, \dots, (\bm{y}_{1,i,t-L+1}^{*})^{T}]^{T}$. The contemporaneous continuous linear predictor is extracted via $\bm{y}_{1,it}^{*} = \bm{Z}_{it} \tilde{\bm{\eta}}_{1,it}$, where the measurement matrix is $\bm{Z}_{it} = [\bm{0}_{U \times L V_1} \quad \bm{I}_{U \times U} \quad \bm{0}_{U \times (L-1)U}]$. The transition matrix is
    \begin{equation*}
        \bm{T}_{it} =
        \begin{bmatrix}
            \bm{T}_{it}^{(1,1)}                                                            & \bm{T}_{it}^{(1,2)}                                                      \\
            \begin{bmatrix} \bm{I}_{(L-1)V_1} & \bm{0}_{(L-1)V_1 \times V_1} \end{bmatrix} & \bm{0}                                                                   \\
            \bm{T}_{it}^{(3,1)}                                                            & \bm{T}_{it}^{(3,2)}                                                      \\
            \bm{0}                                                                         & \begin{bmatrix} \bm{I}_{(L-1)U} & \bm{0}_{(L-1)U \times U} \end{bmatrix}
        \end{bmatrix} ,
    \end{equation*}
    where the block-columns for $k=1,\dots,L$ are defined by
    \begin{align*}
        \{\bm{T}_{it}^{(1,1)}\}_{k} & = \bm{M}_{1,i,t+1} [\bm{\mathcal{P}}^{\eta}_{i,t+1}(L)]_k, &
        \{\bm{T}_{it}^{(1,2)}\}_{k} & = \bm{M}_{1,i,t+1} [\bm{\mathcal{P}}^{y^*}_{i,t+1}(L)]_k       \\
        \{\bm{T}_{it}^{(3,1)}\}_{k} & = \bm{N}_{1,i,t+1} [\bm{\mathcal{Q}}^{\eta}_{i,t+1}(L)]_k, &
        \{\bm{T}_{it}^{(3,2)}\}_{k} & = \bm{N}_{1,i,t+1} [\bm{\mathcal{Q}}^{y^*}_{i,t+1}(L)]_k.
    \end{align*}
    The matrices $\bm{M}_{1,i,t+1}$ and $\bm{N}_{1,i,t+1}$ are defined in \eqref{eq:M1it} and \eqref{eq:N1it}, the composite polynomials $\bm{\mathcal{P}}^{\eta}_{i,t+1}(L)$, $\bm{\mathcal{P}}^{y^*}_{i,t+1}(L)$, $\bm{\mathcal{Q}}^{\eta}_{i,t+1}(L)$, and $\bm{\mathcal{Q}}^{y^*}_{i,t+1}(L)$ are defined in \eqref{eq:P_poly} and \eqref{eq:Q_poly}, and the intercept vector $\bm{c}_{it}$ and process noise covariance matrix $\bm{W}_{it}$ are defined in \eqref{eq:c_intercept} and \eqref{eq:w_process_noise}.
\end{theorem}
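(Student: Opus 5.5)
The plan is to solve the within-level model \eqref{eq:WithinLevelModel} for the contemporaneous quantities $\bm{\eta}_{1,i,t+1}$ and $\bm{y}_{1,i,t+1}^{*}$ in terms of lagged quantities, covariates, intercepts and the disturbance $\bm{\xi}_{1,i,t+1}$. The resulting expressions are then stacked into the companion (shift-register) form \eqref{eq:StateSpaceModelTheorem1}. This mirrors the structure of Theorem 1 of \citet{sorensenEfficientBayesianEstimation2026}. The only change is that the measurement recursion is written in terms of the noise-free linear predictor $\bm{y}_{1,it}^{*}$ rather than observations, so $\bm{y}_{1,it}^{*}$ becomes part of the state and the measurement equation is the pure selection $\bm{y}_{1,it}^{*}=\bm{Z}_{it}\tilde{\bm{\eta}}_{1,it}$.

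\textbf{Step 1: isolate the zero-lag terms.} Using Definition \ref{def:strictly-lagged}, I would write the two equations at time $t+1$ as
\begin{equation*}
(\bm{I}-\bm{R}_{0})\bm{y}^{*}_{1} - \bm{\Lambda}_{1,0}\bm{\eta}_{1} = \bm{\nu}_{1}+\bm{K}_{1}\bm{X}_{1} + \bm{\Lambda}_{1}^{*}(L)\bm{\eta}_{1} + \bm{R}^{*}(L)\bm{y}_{1}^{*},
\end{equation*}
\begin{equation*}
(\bm{I}-\bm{B}_{1,0})\bm{\eta}_{1} - \bm{Q}_{0}\bm{y}^{*}_{1} = \bm{\alpha}_{1}+\bm{\Gamma}_{1}\bm{X}_{1} + \bm{B}_{1}^{*}(L)\bm{\eta}_{1}+\bm{Q}^{*}(L)\bm{y}_{1}^{*}+\bm{\xi}_{1}.
\end{equation*}
Because $\bm{R}_{0}$ and $\bm{B}_{1,0}$ are strictly lower triangular, $\bm{I}-\bm{R}_{0}$ and $\bm{I}-\bm{B}_{1,0}$ are unit lower triangular and hence invertible.

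\textbf{Step 2: solve the joint system.} I would treat the left-hand sides as a $2\times2$ block system and solve it via Schur complements. Eliminating $\bm{y}_{1}^{*}$ gives $\bm{\eta}_{1}=\bm{M}_{1}\{\cdots\}$ with
\begin{equation*}
\bm{M}_{1}=[\bm{I}-\bm{B}_{1,0}-\bm{Q}_{0}(\bm{I}-\bm{R}_{0})^{-1}\bm{\Lambda}_{1,0}]^{-1},
\end{equation*}
and $\bm{N}_{1}$ is defined analogously for $\bm{y}_{1}^{*}$. These should be the matrices referred to as \eqref{eq:M1it} and \eqref{eq:N1it}.

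Collecting the lagged terms on the right then defines the composite polynomials. For example, $\bm{\mathcal{P}}^{\eta}(L)=\bm{B}_{1}^{*}(L)+\bm{Q}_{0}(\bm{I}-\bm{R}_{0})^{-1}\bm{\Lambda}_{1}^{*}(L)$, with $\bm{\mathcal{P}}^{y^*}$, $\bm{\mathcal{Q}}^{\eta}$ and $\bm{\mathcal{Q}}^{y^*}$ built similarly. The constant terms (intercepts and covariates) give $\bm{c}_{it}$. The noise enters as $\bm{M}_{1}\bm{\xi}_{1}$ in the $\bm{\eta}$ block and as $\bm{N}_{1}(\cdot)\bm{\xi}_{1}$ in the $\bm{y}^{*}$ block. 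This makes $\bm{W}_{it}$ a rank-deficient covariance with blocks $\bm{M}_{1}\bm{\Psi}_{1}\bm{M}_{1}^{T}$, the corresponding cross-covariances, and zeros in the shift rows.

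\textbf{Step 3: read off the block columns.} Since all lags are at most $L$, and time $t+1$ lag $k$ corresponds to the entries $\bm{\eta}_{1,i,t+1-k}$ and $\bm{y}^{*}_{1,i,t+1-k}$ at positions $k=1,\dots,L$ of $\tilde{\bm{\eta}}_{1,it}$, the coefficients $[\cdot]_k$ from Definition \ref{def:coeff-extraction} give exactly the block columns of $\bm{T}^{(1,\cdot)}$ and $\bm{T}^{(3,\cdot)}$. The shift rows, consisting of identity blocks with a trailing zero block, merely copy lagged values and are trivially exact.

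\textbf{Step 4: prove equivalence in both directions.} In one direction, every solution of \eqref{eq:WithinLevelModel} yields a trajectory satisfying \eqref{eq:StateSpaceModelTheorem1}. Conversely, the first and third block rows, multiplied back by the inverted matrices, recover \eqref{eq:WithinLevelModel} because every step above is invertible. Matching the Gaussian laws then follows because $\bm{w}_{it}$ is a linear image of $\bm{\xi}_{1,i,t+1}$.

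\textbf{Main obstacle.} I expect the main difficulty to be time-varying coefficients. When the polynomial coefficients depend on $t$, products such as $\bm{Q}_{0,i,t+1}(\bm{I}-\bm{R}_{0,i,t+1})^{-1}\bm{\Lambda}^{*}_{1,i,t+1}(L)$ must be expanded with the time index attached to the correct side of the lag operator. This is why the left-canonical form in Definition \ref{def:coeff-extraction} is needed. I would handle it by always moving coefficients to the left, so that $\bm{A}_{t}L^{k}$ is kept and never commuted. The statement has to be checked carefully for this ordering; the rest is bookkeeping.
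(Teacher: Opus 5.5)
Your proposal matches the paper's proof essentially step for step: you isolate the zero-lag terms, solve the $2\times 2$ block system via the Schur complement, and read off the composite polynomials, intercepts, and rank-deficient $\bm{W}_{it}$ from the lag-$k$ coefficients (your $\bm{M}_{1}=[\bm{I}-\bm{B}_{1,0}-\bm{Q}_{0}(\bm{I}-\bm{R}_{0})^{-1}\bm{\Lambda}_{1,0}]^{-1}$ equals the paper's $\{\bm{I}-\bm{\Xi}_{0}\bm{Q}_{0}\bm{A}_{0}\bm{\Lambda}_{1,0}\}^{-1}\bm{\Xi}_{0}$). Your explicit converse in Step 4 is a small addition the paper leaves implicit; like the paper, it tacitly requires that Schur complement to be invertible, i.e.\ no singular contemporaneous feedback through $\bm{Q}_{0}$ and $\bm{\Lambda}_{1,0}$, which strict lower triangularity of $\bm{R}_{0}$ and $\bm{B}_{1,0}$ does not by itself guarantee.
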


\begin{proof}
    Isolating the lag-$0$ components on the left-hand side, the coupled within-level model for the discrete response case can be rewritten as
    \begin{align}
        \label{eq:ystar_separated}
        \left(\bm{I} - \bm{R}_{0it}\right) \bm{y}_{1,it}^{*} - \bm{\Lambda}_{1,0it} \bm{\eta}_{1,it} & = \bm{\nu}_{1,it} + \bm{K}_{1,it} \bm{X}_{1,it} + \bm{\Lambda}^{*}_{1,it}(L) \bm{\eta}_{1,it} + \bm{R}^{*}_{it}(L) \bm{y}_{1,it}^{*} \\
        \label{eq:eta_separated}
        \left(\bm{I} - \bm{B}_{1,0it}\right) \bm{\eta}_{1,it} - \bm{Q}_{0it} \bm{y}_{1,it}^{*}       & = \bm{\alpha}_{1,it} + \bm{\Gamma}_{1,it} \bm{X}_{1,it} + \bm{B}^{*}_{1,it}(L) \bm{\eta}_{1,it} + \bm{Q}^{*}_{it}(L) \bm{y}_{1,it}^{*} + \bm{\xi}_{1,it}.
    \end{align}
    Note that \eqref{eq:ystar_separated} differs from equation (A.2) in \citet{sorensenEfficientBayesianEstimation2026} in that it contains no explicit measurement error term $\bm{\epsilon}_{1,it}$.
    
    Let $\bm{A}_{0it} = (\bm{I} - \bm{R}_{0it})^{-1}$ and $\bm{\Xi}_{0it} = (\bm{I} - \bm{B}_{1,0it})^{-1}$, noting these exist as $\bm{R}_{0it}$ and $\bm{B}_{1,0it}$ are strictly lower triangular. We assemble the left-hand sides of \eqref{eq:ystar_separated} and \eqref{eq:eta_separated} into a $2 \times 2$ block matrix structure and compute its inverse to isolate the contemporaneous states:
    \begin{equation}
        \label{eq:JointSystem}
        \begin{bmatrix} \bm{\eta}_{1,it} \\ \bm{y}_{1,it}^{*} \end{bmatrix}
        =
        \begin{bmatrix} \bm{\Xi}_{0it}^{-1} & -\bm{Q}_{0it} \\ -\bm{\Lambda}_{1,0it} & \bm{A}_{0it}^{-1} \end{bmatrix}^{-1}
        \begin{bmatrix} \text{RHS}_{\eta, it} \\ \text{RHS}_{y^*, it} \end{bmatrix},
    \end{equation}
    where $\text{RHS}_{\eta, it}$ and $\text{RHS}_{y^*, it}$ represent the respective right-hand sides of \eqref{eq:eta_separated} and \eqref{eq:ystar_separated}. Block-matrix inversion via the Schur complement yields
    \begin{equation*}
        \begin{bmatrix}
            \bm{\Xi}_{0it}^{-1}   & -\bm{Q}_{0it}     \\
            -\bm{\Lambda}_{1,0it} & \bm{A}_{0it}^{-1}
        \end{bmatrix}^{-1}
        = \begin{bmatrix}
            \bm{M}_{1,it}                                     & \bm{M}_{1,it} \bm{Q}_{0it} \bm{A}_{0it} \\
            \bm{N}_{1,it} \bm{\Lambda}_{1,0it} \bm{\Xi}_{0it} & \bm{N}_{1,it}
        \end{bmatrix},
    \end{equation*}
    where
    \begin{align}
        \bm{M}_{1,it} & = \left\{ \bm{I} - \bm{\Xi}_{0it} \bm{Q}_{0it} \bm{A}_{0it} \bm{\Lambda}_{1,0it} \right\}^{-1} \bm{\Xi}_{0it} \label{eq:M1it} \\
        \bm{N}_{1,it} & = \left\{ \bm{I} - \bm{A}_{0it} \bm{\Lambda}_{1,0it} \bm{\Xi}_{0it} \bm{Q}_{0it}  \right\}^{-1} \bm{A}_{0it}.
        \label{eq:N1it}
    \end{align}
    By advancing the indices of \eqref{eq:JointSystem} to $t+1$ and distributing the block-inverse across the right-hand sides, we identify the composite transition polynomials. For the structural update equation ($\bm{\eta}_{1,i,t+1}$), we find
    \begin{equation}
        \begin{aligned}
            \bm{\mathcal{P}}^{\eta}_{i,t+1}(L) & = \bm{B}^{*}_{1,i,t+1}(L) + \bm{Q}_{0,i,t+1} \bm{A}_{0,i,t+1} \bm{\Lambda}^{*}_{1,i,t+1}(L) \\
            \bm{\mathcal{P}}^{y^*}_{i,t+1}(L)    & = \bm{Q}^{*}_{i,t+1}(L) + \bm{Q}_{0,i,t+1} \bm{A}_{0,i,t+1} \bm{R}^{*}_{i,t+1}(L).
        \end{aligned}
        \label{eq:P_poly}
    \end{equation}
    For the linear predictor update equation ($\bm{y}^{*}_{1,i,t+1}$), grouping the lagged state elements identifies
    \begin{equation}
        \begin{aligned}
            \bm{\mathcal{Q}}^{\eta}_{i,t+1}(L) & = \bm{\Lambda}^{*}_{1,i,t+1}(L) + \bm{\Lambda}_{1,0,i,t+1} \bm{\Xi}_{0,i,t+1} \bm{B}^{*}_{1,i,t+1}(L) \\
            \bm{\mathcal{Q}}^{y^*}_{i,t+1}(L)    & = \bm{R}^{*}_{i,t+1}(L) + \bm{\Lambda}_{1,0,i,t+1} \bm{\Xi}_{0,i,t+1} \bm{Q}^{*}_{i,t+1}(L).
        \end{aligned}
        \label{eq:Q_poly}
    \end{equation}
    Extracting the $k$-th order coefficients generates the transition sub-blocks $\bm{T}_{it}^{(1,1)}$, $\bm{T}_{it}^{(1,2)}$, $\bm{T}_{it}^{(3,1)}$, and $\bm{T}_{it}^{(3,2)}$. Distributing the block-inverse matrices across the exogenous deterministic variables yields the intercepts
    \begin{align*}
        \bm{c}_{it}^{(1)} & = \bm{M}_{1,i,t+1} \left\{ \bm{\alpha}_{1,i,t+1} + \bm{\Gamma}_{1,i,t+1} \bm{X}_{1,i,t+1} + \bm{Q}_{0,i,t+1}\bm{A}_{0,i,t+1} \left( \bm{\nu}_{1,i,t+1} + \bm{K}_{1,i,t+1} \bm{X}_{1,i,t+1} \right) \right\}            \\
        \bm{c}_{it}^{(3)} & = \bm{N}_{1,i,t+1} \left\{ \bm{\nu}_{1,i,t+1} + \bm{K}_{1,i,t+1} \bm{X}_{1,i,t+1} + \bm{\Lambda}_{1,0,i,t+1}\bm{\Xi}_{0,i,t+1} \left( \bm{\alpha}_{1,i,t+1} + \bm{\Gamma}_{1,i,t+1} \bm{X}_{1,i,t+1} \right) \right\},
    \end{align*}
    giving the full augmented intercept vector
    \begin{equation}
        \label{eq:c_intercept}
        \bm{c}_{it} = \begin{bmatrix}
            (\bm{c}_{it}^{(1)})^{T} & \bm{0}_{(L-1)V_1}^{T} & (\bm{c}_{it}^{(3)})^{T} & \bm{0}_{(L-1)U}^{T}
        \end{bmatrix}^{T}.
    \end{equation}
    Finally, because the observation-level noise $\bm{\epsilon}_{1,i,t+1}$ is absent from the structural transitions, distributing the inverse matrices over the stochastic errors isolates the serially independent process noise components:
    \begin{align}
        \bm{w}_{it}^{(1)} & = \bm{M}_{1,i,t+1} \bm{\xi}_{1,i,t+1} \label{eq:w1_noise} \\
        \bm{w}_{it}^{(3)} & = \bm{N}_{1,i,t+1} \bm{\Lambda}_{1,0,i,t+1} \bm{\Xi}_{0,i,t+1} \bm{\xi}_{1,i,t+1}. \label{eq:w3_noise}
    \end{align}
 Taking the variances and cross-covariances of \eqref{eq:w1_noise} and \eqref{eq:w3_noise} generates the exact, simplified covariance components:
    \begin{align*}
        \bm{W}_{it}^{(1,1)} & = \bm{M}_{1,i,t+1} \bm{\Psi}_{1,i,t+1} \bm{M}_{1,i,t+1}^{T}                     \\
        \bm{W}_{it}^{(3,3)} & = \bm{N}_{1,i,t+1} \bm{\Lambda}_{1,0,i,t+1} \bm{\Xi}_{0,i,t+1} \bm{\Psi}_{1,i,t+1} \bm{\Xi}_{0,i,t+1}^{T} \bm{\Lambda}_{1,0,i,t+1}^{T} \bm{N}_{1,i,t+1}^{T} \\
        \bm{W}_{it}^{(1,3)} & = \bm{M}_{1,i,t+1} \bm{\Psi}_{1,i,t+1} \bm{\Xi}_{0,i,t+1}^{T} \bm{\Lambda}_{1,0,i,t+1}^{T} \bm{N}_{1,i,t+1}^{T},
    \end{align*}
    with $\bm{W}_{it}^{(3,1)} = \left( \bm{W}_{it}^{(1,3)} \right)^{T}$. 
    
    Because the strict state lag elements transition deterministically, they contain no process variance. This enforces a sparse, block-diagonal covariance mapping corresponding to our $L$-lag augmented state vector space
    \begin{equation}
        \label{eq:w_process_noise}
        \bm{W}_{it} =
        \begin{bmatrix}
            \bm{W}_{it}^{(1,1)} & \bm{0} & \bm{W}_{it}^{(1,3)} & \bm{0} \\
            \bm{0}              & \bm{0} & \bm{0}              & \bm{0} \\
            \bm{W}_{it}^{(3,1)} & \bm{0} & \bm{W}_{it}^{(3,3)} & \bm{0} \\
            \bm{0}              & \bm{0} & \bm{0}              & \bm{0}
        \end{bmatrix}.
    \end{equation}
    This formulation guarantees exact equivalence with the LG-SSM dynamics for the augmented state equations. 
\end{proof}

\clearpage
\setcounter{table}{0}
\renewcommand{\thetable}{S\arabic{table}}
\setcounter{figure}{0}
\renewcommand{\thefigure}{S\arabic{figure}}
\setcounter{section}{0}
\renewcommand{\thesection}{S\arabic{section}}
\setcounter{equation}{0}
\renewcommand{\theequation}{S\arabic{equation}}

\begin{center}
\LARGE\textbf{Online Resource 2}
\end{center}

This document contains additional details from the simulation experiments reported in the main paper.

\section{Bernoulli Response Five-Indicator AR(1) Models}

\subsection{Participant-Invariant Dynamics}

Figure \ref{fig:example1-bulk-efficiency-logit} shows bulk efficiency with the logit link. Figures \ref{fig:example1-tail-efficiency-probit} and \ref{fig:example1-tail-efficiency-logit} show tail efficiency with the probit and logit links, respectively. Tables \ref{tab:example1-efficiency_probit} and \ref{tab:example1-efficiency_logit} show the bulk efficiency gain of the hybrid sampler compared to the other algorithms for the probit and logit links, respectively. Tables \ref{tab:example1-waittime_probit} and \ref{tab:example1-waittime_logit} show estimated times one will have to wait to reach an effective sample size of $1{,}000$. Table \ref{tab:example1-iterations} shows the number of iterations run for all algorithms in all conditions. Tables \ref{tab:example1-diag_probit} and \ref{tab:example1-diag_logit} show MCMC diagnostics for all settings. Tables \ref{tab:example1-bias_probit_N50_T50} and \ref{tab:example1-bias_logit_N50_T50} show the average posterior means together with biases for all parameters and algorithms in the $N=50$ and $T=50$ case, strongly suggesting that our implementations correctly target the posterior distribution. Similar tables for other $N$ and $T$ can be generated with the script \texttt{code/example1/make\_posterior\_tables.py} in the OSF repository, and they all give very similar numbers.

The hybrid sampler and the NUTS samplers were run with a target acceptance probability during warm-up of $0.80$ and $10$ as the maximum treedepth. All five algorithms used exactly the same initial values. In terms of the original parametrization, these were $\phi = 0$, $\psi_{1}^{2}=1$, $\psi_{2}^{2}=1$, $\bm{\nu}=\bm{0}$, $\bm{\lambda}_{1}=\bm{0}$, and $\bm{\lambda}_{2}=\bm{0}$. 

For all algorithms except the \textsc{JAGS} implementation, the initial value of the within-level latent state was sampled from a zero-mean normal distribution with variance equal to the marginal variance of the AR(1) process,
\begin{equation*}
    \eta_{1,i1} \sim \mathcal{N}\left(0, \frac{\psi_{1}^{2}}{1-\phi^{2}}\right).
\end{equation*}
Since \textsc{JAGS} would not be able to derive closed-form conditional posteriors with such a non-linear dependence, we here instead used $\eta_{1,it} \sim \mathcal{N}(0,1)$ to prevent it from falling back on Metropolis-Hastings sampling.

\begin{figure}
    \includegraphics[width=\linewidth]{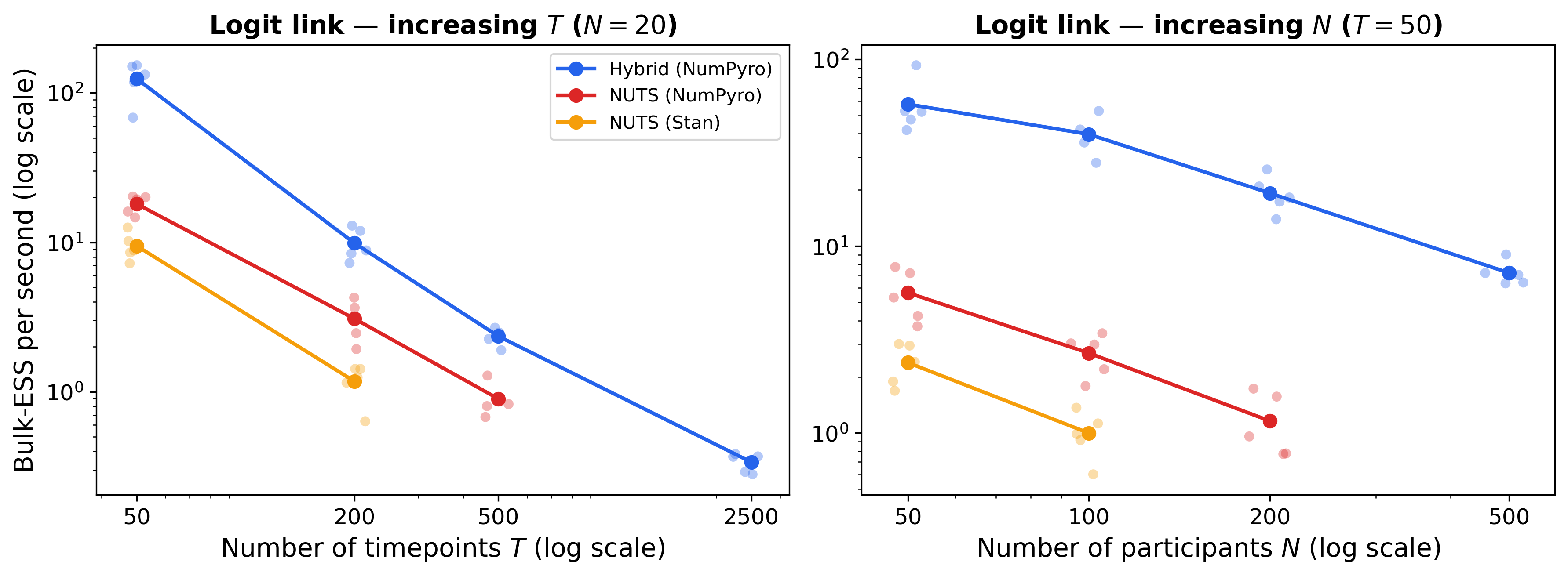}
    \caption{Bulk efficiency plot for the five-indicator AR(1) model with participant-invariant dynamics with logit link.}
    \label{fig:example1-bulk-efficiency-logit}
\end{figure}

\begin{figure}
    \includegraphics[width=\linewidth]{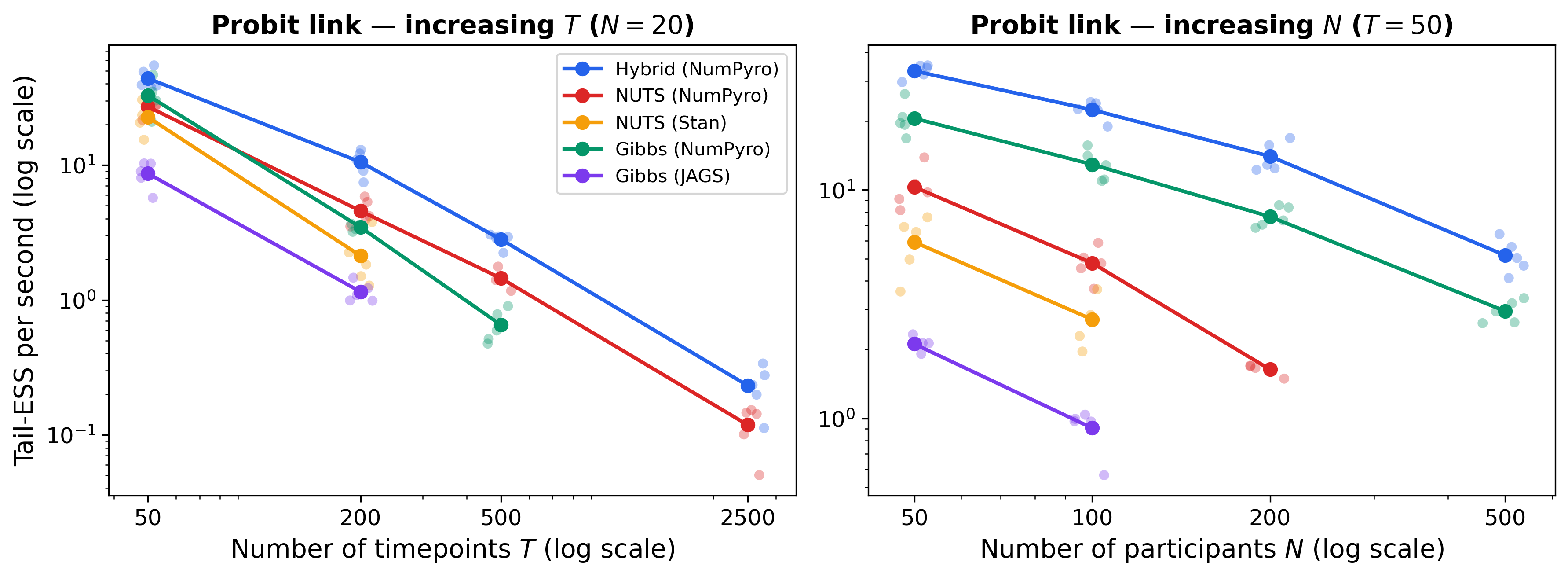}
    \caption{Tail efficiency plot for the five-indicator AR(1) model with participant-invariant dynamics with probit link.}
    \label{fig:example1-tail-efficiency-probit}
\end{figure}

\begin{figure}
    \includegraphics[width=\linewidth]{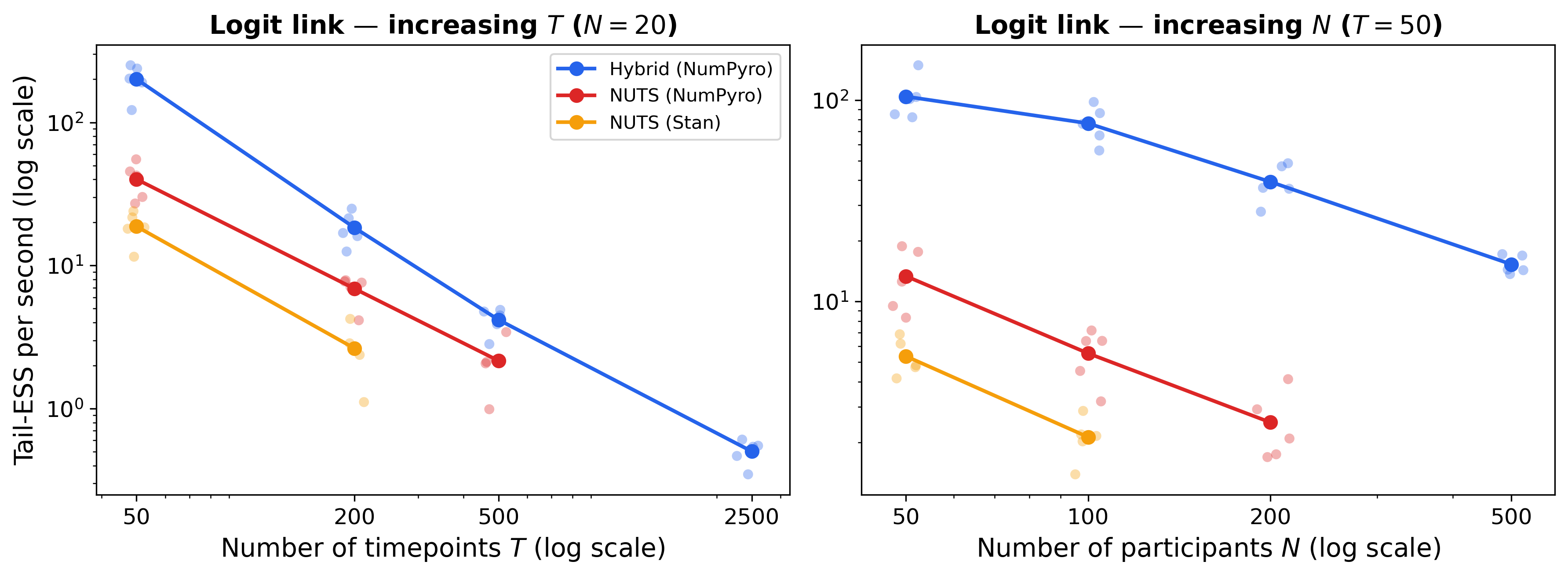}
    \caption{Tail efficiency plot for the five-indicator AR(1) model with participant-invariant dynamics with logit link.}
    \label{fig:example1-tail-efficiency-logit}
\end{figure}

\begin{table}[htbp]
\centering
\caption{Efficiency gain of the hybrid sampler for the probit  link.}
\label{tab:example1-efficiency_probit}
\small
\begin{tabular}{ccrrrr}
\toprule
$N$ & $T$ & NUTS (NumPyro) & NUTS (Stan) & Gibbs (NumPyro) & Gibbs (JAGS) \\
\midrule
20 & 50 & 1.8 & 2.5 & 1.7 & 6.8 \\
20 & 200 & 3.2 & 5.1 & 3.3 & 11.6 \\
20 & 500 & 2.4 & --- & 5.5 & --- \\
20 & 2500 & 2.3 & --- & --- & --- \\
50 & 50 & 4.0 & 6.2 & 1.7 & 16.8 \\
100 & 50 & 4.6 & 9.2 & 1.9 & 27.7 \\
200 & 50 & 8.6 & --- & 1.9 & --- \\
500 & 50 & --- & --- & 1.9 & --- \\
\bottomrule
\end{tabular}
\end{table}

\begin{table}[htbp]
\centering
\caption{Efficiency gain of the hybrid sampler for the logit  link.}
\label{tab:example1-efficiency_logit}
\small
\begin{tabular}{ccrrrr}
\toprule
$N$ & $T$ & NUTS (NumPyro) & NUTS (Stan) & Gibbs (NumPyro) & Gibbs (JAGS) \\
\midrule
20 & 50 & 6.9 & 13.1 & --- & --- \\
20 & 200 & 3.2 & 8.4 & --- & --- \\
20 & 500 & 2.6 & --- & --- & --- \\
50 & 50 & 10.2 & 24.2 & --- & --- \\
100 & 50 & 14.8 & 39.8 & --- & --- \\
200 & 50 & 16.6 & --- & --- & --- \\
\bottomrule
\end{tabular}
\end{table}

\begin{table}[htbp]
\centering
\caption{Estimated wall-clock time to obtain an effective sample size of 1000 for the probit link.}
\label{tab:example1-waittime_probit}
\small
\begin{tabular}{ccrclr}
\toprule
$N$ & $T$ & Hybrid & & Second-best algorithm & Time \\
\midrule
20 & 50 & 40\,s & & Gibbs (NumPyro) & 1.1\,min \\
20 & 200 & 2.9\,min & & NUTS (NumPyro) & 9.4\,min \\
20 & 500 & 10.8\,min & & NUTS (NumPyro) & 26.0\,min \\
20 & 2500 & 2.3\,h & & NUTS (NumPyro) & 5.3\,h \\
50 & 50 & 59\,s & & Gibbs (NumPyro) & 1.7\,min \\
100 & 50 & 1.6\,min & & Gibbs (NumPyro) & 2.9\,min \\
200 & 50 & 2.6\,min & & Gibbs (NumPyro) & 4.8\,min \\
500 & 50 & 6.4\,min & & Gibbs (NumPyro) & 12.3\,min \\
\bottomrule
\end{tabular}
\end{table}

\begin{table}[htbp]
\centering
\caption{Estimated wall-clock time to obtain an effective sample size of 1000 for the logit link.}
\label{tab:example1-waittime_logit}
\small
\begin{tabular}{ccrclr}
\toprule
$N$ & $T$ & Hybrid & & Second-best algorithm & Time \\
\midrule
20 & 50 & 8\,s & & NUTS (NumPyro) & 55\,s \\
20 & 200 & 1.7\,min & & NUTS (NumPyro) & 5.4\,min \\
20 & 500 & 7.1\,min & & NUTS (NumPyro) & 18.5\,min \\
50 & 50 & 17\,s & & NUTS (NumPyro) & 3.0\,min \\
100 & 50 & 25\,s & & NUTS (NumPyro) & 6.2\,min \\
200 & 50 & 52\,s & & NUTS (NumPyro) & 14.4\,min \\
\bottomrule
\end{tabular}
\end{table}

\begin{table}[htbp]
\centering
\caption{Number of iterations (warm-up / sampling) for all conditions and algorithms. Missing entries indicate that the algorithm was not run for that condition.}
\label{tab:example1-iterations}
\small
\begin{tabular}{llccccc}
\toprule
$N$ & $T$ & Hybrid (NumPyro) & NUTS (NumPyro) & NUTS (Stan) & Gibbs (NumPyro) & Gibbs (JAGS) \\
\midrule
20 & 50 & 1,000 / 4,000 & 1,000 / 8,000 & 1,000 / 8,000 & 5,000 / 20,000 & 5,000 / 20,000 \\
20 & 200 & 1,000 / 4,000 & 1,000 / 4,000 & 1,000 / 4,000 & 5,000 / 100,000 & 5,000 / 20,000 \\
20 & 500 & 1,000 / 4,000 & 1,000 / 4,000 & -- & 5,000 / 100,000 & -- \\
50 & 50 & 1,000 / 4,000 & 1,000 / 8,000 & 1,000 / 8,000 & 10,000 / 100,000 & 5,000 / 20,000 \\
100 & 50 & 1,000 / 4,000 & 1,000 / 8,000 & 1,000 / 8,000 & 10,000 / 100,000 & 5,000 / 20,000 \\
200 & 50 & 1,000 / 4,000 & 1,000 / 8,000 & -- & 10,000 / 100,000 & -- \\
500 & 50 & 1,000 / 4,000 & -- & -- & 10,000 / 100,000 & -- \\
2500 & 50 & 1,000 / 4,000 & -- & -- & 10,000 / 40,000 & -- \\
\bottomrule
\end{tabular}
\end{table}

\begin{table}[htbp]
\centering
\caption{MCMC diagnostics for the probit link, averaged over 5 datasets.}
\label{tab:example1-diag_probit}
\small
\begin{tabular}{cclrrrrr}
\toprule
$N$ & $T$ & Algorithm & Wall time (s) & Min Bulk-ESS & Min Tail-ESS & Max $\hat{R}$ & Min Bulk-ESS/s \\
\midrule
20 & 50 & Hybrid (NumPyro) & 19 & 480 & 834 & 1.012 & 25 \\
 &  & NUTS (NumPyro) & 69 & 957 & 1874 & 1.006 & 14 \\
 &  & NUTS (Stan) & 89 & 912 & 2017 & 1.006 & 10 \\
 &  & Gibbs (NumPyro) & 26 & 394 & 863 & 1.012 & 15 \\
 &  & Gibbs (JAGS) & 204 & 760 & 1769 & 1.006 & 3.7 \\
\midrule
20 & 200 & Hybrid (NumPyro) & 94 & 536 & 983 & 1.010 & 5.7 \\
 &  & NUTS (NumPyro) & 160 & 281 & 711 & 1.010 & 1.8 \\
 &  & NUTS (Stan) & 277 & 291 & 555 & 1.014 & 1.1 \\
 &  & Gibbs (NumPyro) & 296 & 510 & 1023 & 1.012 & 1.7 \\
 &  & Gibbs (JAGS) & 1083 & 532 & 1243 & 1.010 & 0.49 \\
\midrule
20 & 500 & Hybrid (NumPyro) & 281 & 414 & 769 & 1.012 & 1.5 \\
 &  & NUTS (NumPyro) & 496 & 317 & 718 & 1.018 & 0.64 \\
 &  & Gibbs (NumPyro) & 633 & 176 & 413 & 1.032 & 0.28 \\
\midrule
20 & 2500 & Hybrid (NumPyro) & 2848 & 320 & 612 & 1.016 & 0.12 \\
 &  & NUTS (NumPyro) & 5545 & 291 & 652 & 1.008 & 0.05 \\
\midrule
50 & 50 & Hybrid (NumPyro) & 32 & 543 & 1065 & 1.008 & 17 \\
 &  & NUTS (NumPyro) & 152 & 643 & 1570 & 1.000 & 4.2 \\
 &  & NUTS (Stan) & 227 & 619 & 1344 & 1.008 & 2.7 \\
 &  & Gibbs (NumPyro) & 197 & 1945 & 4046 & 1.000 & 9.9 \\
 &  & Gibbs (JAGS) & 607 & 610 & 1286 & 1.006 & 1.0 \\
\midrule
100 & 50 & Hybrid (NumPyro) & 47 & 495 & 1054 & 1.010 & 11 \\
 &  & NUTS (NumPyro) & 258 & 590 & 1235 & 1.008 & 2.3 \\
 &  & NUTS (Stan) & 505 & 574 & 1351 & 1.008 & 1.1 \\
 &  & Gibbs (NumPyro) & 324 & 1832 & 4186 & 1.002 & 5.7 \\
 &  & Gibbs (JAGS) & 1386 & 527 & 1253 & 1.010 & 0.38 \\
\midrule
200 & 50 & Hybrid (NumPyro) & 77 & 498 & 1078 & 1.010 & 6.5 \\
 &  & NUTS (NumPyro) & 643 & 487 & 1059 & 1.004 & 0.76 \\
 &  & Gibbs (NumPyro) & 564 & 1968 & 4298 & 1.000 & 3.5 \\
\midrule
500 & 50 & Hybrid (NumPyro) & 174 & 457 & 902 & 1.010 & 2.6 \\
 &  & Gibbs (NumPyro) & 1381 & 1876 & 4067 & 1.000 & 1.4 \\
\bottomrule
\end{tabular}
\end{table}

\begin{table}[htbp]
\centering
\caption{MCMC diagnostics for the logit link, averaged over 5 datasets.}
\label{tab:example1-diag_logit}
\small
\begin{tabular}{cclrrrrr}
\toprule
$N$ & $T$ & Algorithm & Wall time (s) & Min Bulk-ESS & Min Tail-ESS & Max $\hat{R}$ & Min Bulk-ESS/s \\
\midrule
20 & 50 & Hybrid (NumPyro) & 21 & 2587 & 4180 & 1.000 & 124 \\
 &  & NUTS (NumPyro) & 33 & 595 & 1314 & 1.012 & 18 \\
 &  & NUTS (Stan) & 63 & 598 & 1181 & 1.006 & 9.5 \\
\midrule
20 & 200 & Hybrid (NumPyro) & 73 & 718 & 1333 & 1.006 & 9.9 \\
 &  & NUTS (NumPyro) & 68 & 202 & 453 & 1.020 & 3.1 \\
 &  & NUTS (Stan) & 170 & 191 & 422 & 1.022 & 1.2 \\
\midrule
20 & 500 & Hybrid (NumPyro) & 235 & 550 & 969 & 1.008 & 2.4 \\
 &  & NUTS (NumPyro) & 201 & 174 & 410 & 1.022 & 0.90 \\
\midrule
20 & 2500 & Hybrid (NumPyro) & 3964 & 1320 & 1938 & 1.002 & 0.34 \\
\midrule
50 & 50 & Hybrid (NumPyro) & 36 & 2046 & 3702 & 1.000 & 58 \\
 &  & NUTS (NumPyro) & 76 & 421 & 996 & 1.010 & 5.6 \\
 &  & NUTS (Stan) & 174 & 416 & 931 & 1.012 & 2.4 \\
\midrule
100 & 50 & Hybrid (NumPyro) & 55 & 2177 & 4197 & 1.000 & 40 \\
 &  & NUTS (NumPyro) & 143 & 382 & 786 & 1.008 & 2.7 \\
 &  & NUTS (Stan) & 393 & 384 & 817 & 1.008 & 1.00 \\
\midrule
200 & 50 & Hybrid (NumPyro) & 96 & 1851 & 3781 & 1.002 & 19 \\
 &  & NUTS (NumPyro) & 318 & 336 & 730 & 1.008 & 1.2 \\
\midrule
500 & 50 & Hybrid (NumPyro) & 231 & 1665 & 3541 & 1.000 & 7.2 \\
\bottomrule
\end{tabular}
\end{table}

\begin{table}[htbp]
\centering
\caption{Posterior recovery with the probit link in the $N = 50$ and $T = 50$ case. Numbers are posterior means averaged across five runs, and numbers in parentheses are average deviations from the true value.}
\label{tab:example1-bias_probit_N50_T50}
\small
\begin{tabular}{lrrrrrr}
\toprule
Parameter & True & Hybrid (NumPyro) & NUTS (NumPyro) & NUTS (Stan) & Gibbs (NumPyro) & Gibbs (JAGS) \\
\midrule
$\phi$ & 0.40 & 0.40 (-0.00) & 0.40 (-0.00) & 0.40 (-0.00) & 0.40 (-0.00) & 0.40 (-0.00) \\
$\psi_1$ & 0.84 & 0.82 (-0.02) & 0.82 (-0.02) & 0.82 (-0.02) & 0.82 (-0.02) & 0.82 (-0.02) \\
$\psi_2$ & 0.50 & 0.49 (-0.01) & 0.48 (-0.02) & 0.48 (-0.02) & 0.46 (-0.04) & 0.46 (-0.04) \\
\addlinespace
$\nu_1$ & -1.00 & -1.01 (-0.01) & -1.01 (-0.01) & -1.00 (-0.00) & -1.00 (+0.00) & -1.00 (-0.00) \\
$\nu_2$ & -0.50 & -0.54 (-0.04) & -0.54 (-0.04) & -0.54 (-0.04) & -0.54 (-0.04) & -0.54 (-0.04) \\
$\nu_3$ & 0.00 & 0.01 (+0.01) & 0.01 (+0.01) & 0.01 (+0.01) & 0.01 (+0.01) & 0.01 (+0.01) \\
$\nu_4$ & 0.50 & 0.51 (+0.01) & 0.51 (+0.01) & 0.51 (+0.01) & 0.51 (+0.01) & 0.51 (+0.01) \\
$\nu_5$ & 1.00 & 1.03 (+0.03) & 1.03 (+0.03) & 1.03 (+0.03) & 1.03 (+0.03) & 1.03 (+0.03) \\
\addlinespace
$\lambda_{1,2}$ & 1.02 & 1.04 (+0.02) & 1.04 (+0.02) & 1.04 (+0.02) & 1.05 (+0.03) & 1.04 (+0.02) \\
$\lambda_{1,3}$ & 0.78 & 0.80 (+0.02) & 0.80 (+0.02) & 0.80 (+0.02) & 0.80 (+0.02) & 0.80 (+0.02) \\
$\lambda_{1,4}$ & 1.06 & 1.11 (+0.06) & 1.11 (+0.05) & 1.11 (+0.06) & 1.12 (+0.06) & 1.11 (+0.05) \\
$\lambda_{1,5}$ & 0.82 & 0.89 (+0.07) & 0.89 (+0.07) & 0.89 (+0.07) & 0.89 (+0.07) & 0.89 (+0.06) \\
\addlinespace
$\lambda_{2,2}$ & 0.89 & 0.92 (+0.03) & 0.92 (+0.03) & 0.92 (+0.03) & 0.92 (+0.04) & 0.92 (+0.04) \\
$\lambda_{2,3}$ & 0.82 & 0.87 (+0.04) & 0.87 (+0.05) & 0.87 (+0.05) & 0.88 (+0.05) & 0.88 (+0.05) \\
$\lambda_{2,4}$ & 0.85 & 0.89 (+0.04) & 0.89 (+0.04) & 0.89 (+0.04) & 0.90 (+0.04) & 0.90 (+0.04) \\
$\lambda_{2,5}$ & 0.87 & 0.94 (+0.06) & 0.94 (+0.07) & 0.94 (+0.07) & 0.95 (+0.07) & 0.95 (+0.07) \\
\bottomrule
\end{tabular}
\end{table}

\begin{table}[htbp]
\centering
\caption{Posterior recovery with the logit link in the $N = 50$ and $T = 50$ case. Numbers are posterior means averaged across five runs, and numbers in parentheses are average deviations from the true value.}
\label{tab:example1-bias_logit_N50_T50}
\small
\begin{tabular}{lrrrr}
\toprule
Parameter & True & Hybrid (NumPyro) & NUTS (NumPyro) & NUTS (Stan) \\
\midrule
$\phi$ & 0.40 & 0.40 (+0.00) & 0.40 (+0.00) & 0.40 (+0.00) \\
$\psi_1$ & 0.84 & 0.82 (-0.02) & 0.82 (-0.02) & 0.82 (-0.02) \\
$\psi_2$ & 0.50 & 0.46 (-0.04) & 0.46 (-0.04) & 0.46 (-0.04) \\
\addlinespace
$\nu_1$ & -1.00 & -1.00 (-0.00) & -1.00 (-0.00) & -1.00 (+0.00) \\
$\nu_2$ & -0.50 & -0.51 (-0.01) & -0.51 (-0.01) & -0.51 (-0.01) \\
$\nu_3$ & 0.00 & 0.01 (+0.01) & 0.01 (+0.01) & 0.01 (+0.01) \\
$\nu_4$ & 0.50 & 0.49 (-0.01) & 0.49 (-0.01) & 0.49 (-0.01) \\
$\nu_5$ & 1.00 & 1.03 (+0.03) & 1.03 (+0.03) & 1.03 (+0.03) \\
\addlinespace
$\lambda_{1,2}$ & 1.02 & 1.03 (+0.01) & 1.04 (+0.01) & 1.04 (+0.01) \\
$\lambda_{1,3}$ & 0.78 & 0.81 (+0.03) & 0.81 (+0.03) & 0.81 (+0.03) \\
$\lambda_{1,4}$ & 1.06 & 1.11 (+0.06) & 1.11 (+0.06) & 1.12 (+0.06) \\
$\lambda_{1,5}$ & 0.82 & 0.86 (+0.04) & 0.86 (+0.04) & 0.86 (+0.04) \\
\addlinespace
$\lambda_{2,2}$ & 0.89 & 0.93 (+0.04) & 0.93 (+0.04) & 0.93 (+0.04) \\
$\lambda_{2,3}$ & 0.82 & 0.90 (+0.07) & 0.90 (+0.07) & 0.90 (+0.07) \\
$\lambda_{2,4}$ & 0.85 & 0.88 (+0.03) & 0.88 (+0.03) & 0.88 (+0.03) \\
$\lambda_{2,5}$ & 0.87 & 0.94 (+0.07) & 0.94 (+0.07) & 0.95 (+0.07) \\
\bottomrule
\end{tabular}
\end{table}

\subsection{Participant-Varying Dynamics}

Figure \ref{fig:example2-bulk-efficiency-probit} shows bulk efficiency with the probit link. Figures \ref{fig:example2-tail-efficiency-probit} and \ref{fig:example2-tail-efficiency-logit} show tail efficiency with the probit and logit links, respectively. Note that we only ran the two \textsc{NumPyro} implementations with the probit link. Tables \ref{tab:example2-efficiency_probit} and \ref{tab:example2-efficiency_logit} show the efficiency gain of the hybrid sampler over the other algorithms for all cases. Table \ref{tab:example2-iterations} shows the number of iterations run for each algorithm in each condition. Tables \ref{tab:example2-diag_probit} and \ref{tab:example2-diag_logit} show MCMC diagnostics for all settings. Tables \ref{tab:example2-bias_probit_N50_T50} and \ref{tab:example2-bias_logit_N50_T50} show the average posterior means together with biases for all parameters and algorithms in the $N=50$ and $T=50$ case, again suggesting that our implementations correctly target the posterior distribution. Similar tables for other $N$ and $T$ can be generated with the script \texttt{code/example2/make\_posterior\_tables.py} in the OSF repository, and they all give very similar numbers.

For all algorithms, the target acceptance probability during warm-up was set to $0.95$ and the maximum treedepth was $10$. The initial values were identical to the ones used with participant-invariant dynamics, except for $\mu_{\phi} = 0$, $\mu_{\psi}=0$, $\mu_{\lambda,j}=1$, $\omega_{\phi}^{2} = 0.3^{2}$, $\omega_{\psi}^{2} = 0.3^{2}$, and $\omega_{\lambda}^{2} = 0.3^{2}$.

\begin{figure}
    \includegraphics[width=\linewidth]{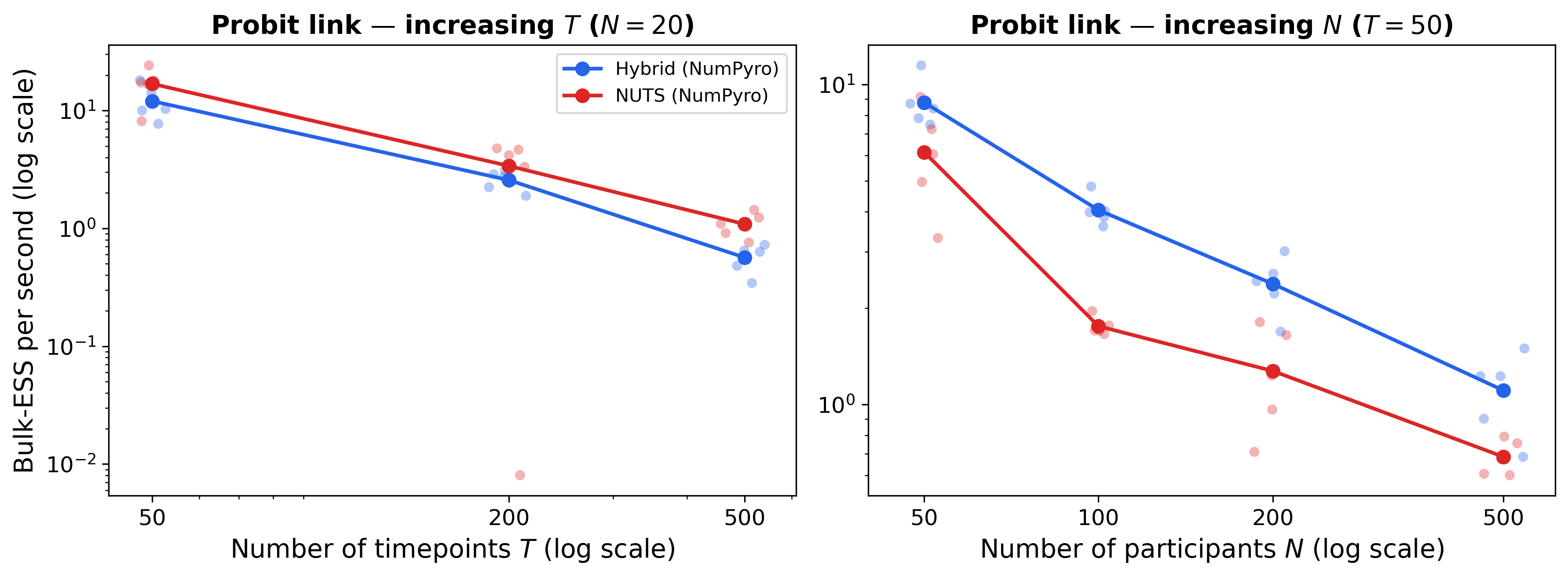}
    \caption{Bulk efficiency plot for the five-indicator AR(1) model with participant-varying dynamics with probit link.}
    \label{fig:example2-bulk-efficiency-probit}
\end{figure}

\begin{figure}
    \includegraphics[width=\linewidth]{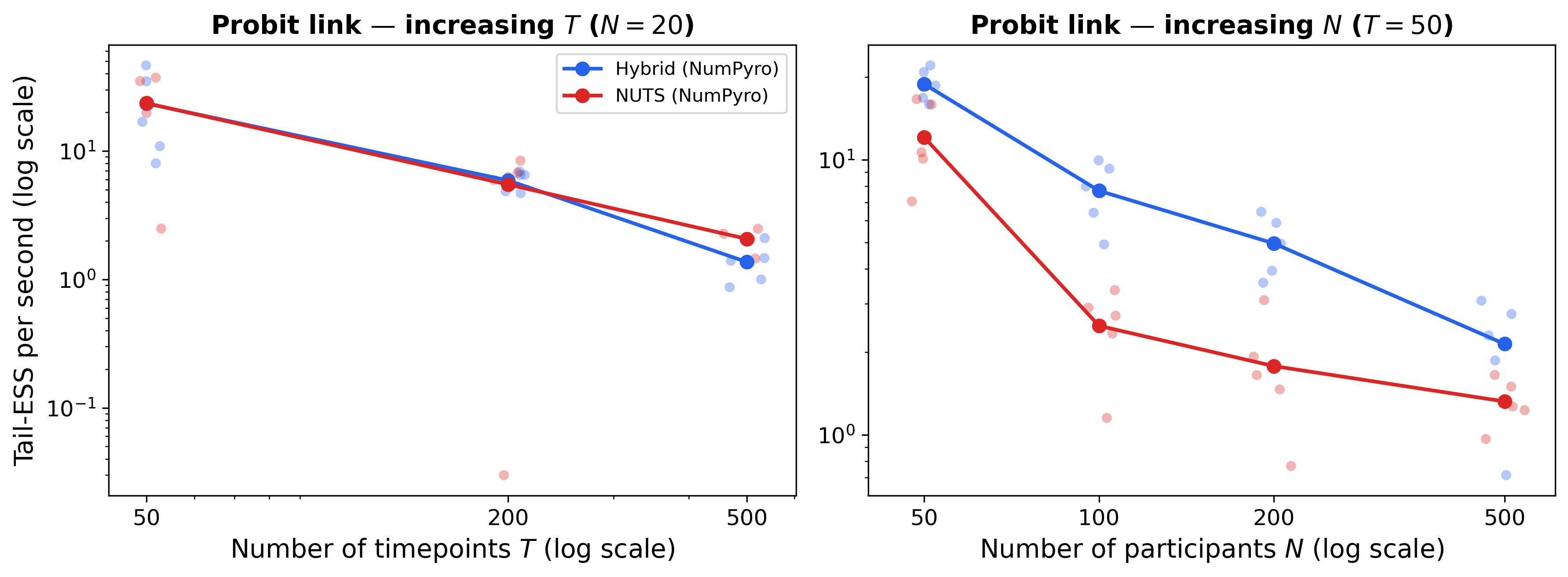}
    \caption{Tail efficiency plot for the five-indicator AR(1) model with participant-varying dynamics with probit link.}
    \label{fig:example2-tail-efficiency-probit}
\end{figure}

\begin{figure}
    \includegraphics[width=\linewidth]{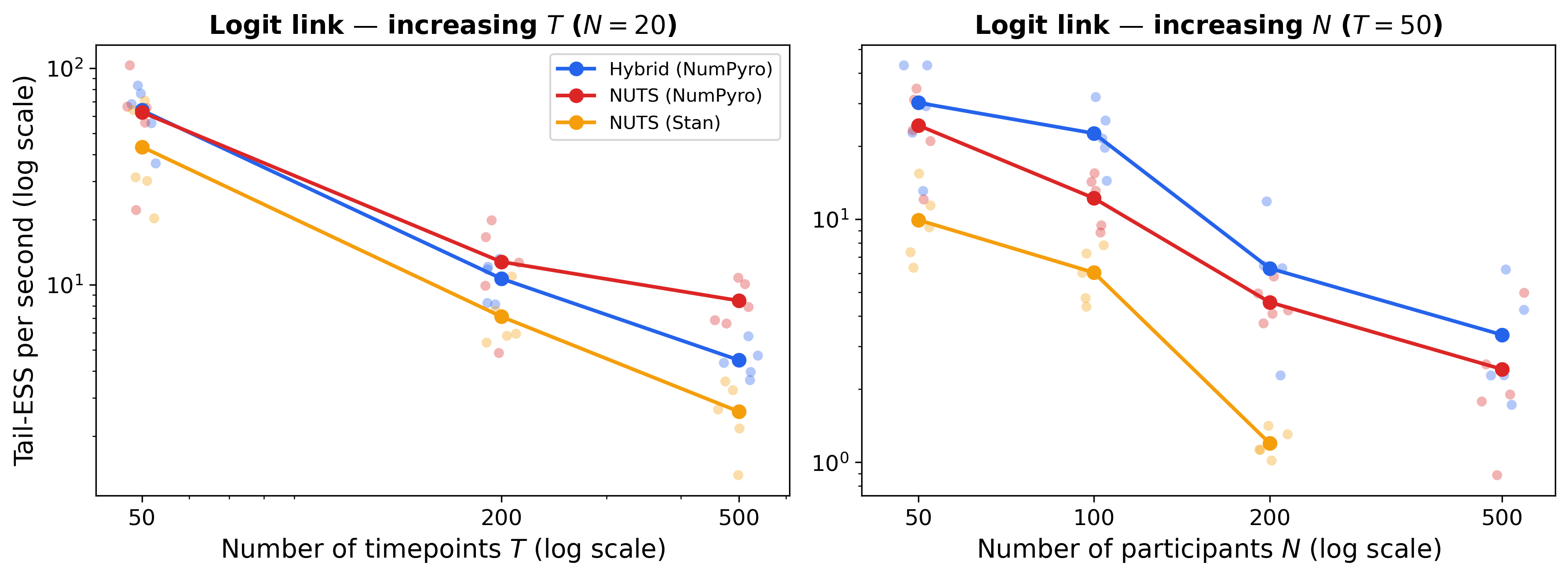}
    \caption{Tail efficiency plot for the five-indicator AR(1) model with participant-varying dynamics with logit link.}
    \label{fig:example2-tail-efficiency-logit}
\end{figure}

\begin{table}[htbp]
\centering
\caption{Efficiency gain of the hybrid sampler for the probit  link.}
\label{tab:example2-efficiency_probit}
\small
\begin{tabular}{ccrr}
\toprule
$N$ & $T$ & NUTS (NumPyro) & NUTS (Stan) \\
\midrule
20 & 50 & 0.7 & --- \\
20 & 200 & 0.8 & --- \\
20 & 500 & 0.5 & --- \\
50 & 50 & 1.4 & --- \\
100 & 50 & 2.3 & --- \\
200 & 50 & 1.9 & --- \\
500 & 50 & 1.6 & --- \\
\bottomrule
\end{tabular}
\end{table}

\begin{table}[htbp]
\centering
\caption{Efficiency gain of the hybrid sampler for the logit  link.}
\label{tab:example2-efficiency_logit}
\small
\begin{tabular}{ccrr}
\toprule
$N$ & $T$ & NUTS (NumPyro) & NUTS (Stan) \\
\midrule
20 & 50 & 0.9 & 1.3 \\
20 & 200 & 0.8 & 1.5 \\
20 & 500 & 0.5 & 1.7 \\
50 & 50 & 1.4 & 2.5 \\
100 & 50 & 1.7 & 3.9 \\
200 & 50 & 1.5 & 5.7 \\
500 & 50 & 1.6 & --- \\
\bottomrule
\end{tabular}
\end{table}

\begin{table}[htbp]
\centering
\caption{Number of iterations (warm-up / sampling) for all conditions and algorithms.}
\label{tab:example2-iterations}
\small
\begin{tabular}{llccc}
\toprule
$N$ & $T$ & Hybrid (NumPyro) & NUTS (NumPyro) & NUTS (Stan) \\
\midrule
20 & 50 & 1,000 / 4,000 & 1,000 / 4,000 & 1,000 / 4,000 \\
20 & 200 & 1,000 / 4,000 & 1,000 / 4,000 & 1,000 / 4,000 \\
20 & 500 & 1,000 / 4,000 & 1,000 / 4,000 & 1,000 / 4,000 \\
50 & 50 & 1,000 / 4,000 & 1,000 / 4,000 & 1,000 / 4,000 \\
100 & 50 & 1,000 / 4,000 & 1,000 / 4,000 & 1,000 / 4,000 \\
200 & 50 & 1,000 / 4,000 & 1,000 / 4,000 & 1,000 / 4,000 \\
500 & 50 & 1,000 / 4,000 & 1,000 / 4,000 & -- \\
\bottomrule
\end{tabular}
\end{table}

\begin{table}[htbp]
\centering
\caption{MCMC diagnostics for the probit link, averaged over 5 datasets.}
\label{tab:example2-diag_probit}
\small
\begin{tabular}{cclrrrrr}
\toprule
$N$ & $T$ & Algorithm & Wall time (s) & Min Bulk-ESS & Min Tail-ESS & Max $\hat{R}$ & Min Bulk-ESS/s \\
\midrule
20 & 50 & Hybrid (NumPyro) & 45 & 526 & 991 & 1.010 & 12 \\
 &  & NUTS (NumPyro) & 117 & 1869 & 2507 & 1.000 & 17 \\
\midrule
20 & 200 & Hybrid (NumPyro) & 288 & 723 & 1670 & 1.010 & 2.6 \\
 &  & NUTS (NumPyro) & 558 & 1638 & 2633 & 1.106 & 3.4 \\
\midrule
20 & 500 & Hybrid (NumPyro) & 1130 & 629 & 1512 & 1.008 & 0.57 \\
 &  & NUTS (NumPyro) & 1214 & 1319 & 2504 & 1.002 & 1.1 \\
\midrule
50 & 50 & Hybrid (NumPyro) & 60 & 515 & 1110 & 1.012 & 8.8 \\
 &  & NUTS (NumPyro) & 261 & 1478 & 2877 & 1.000 & 6.1 \\
\midrule
100 & 50 & Hybrid (NumPyro) & 112 & 454 & 864 & 1.010 & 4.0 \\
 &  & NUTS (NumPyro) & 640 & 1119 & 1579 & 1.002 & 1.8 \\
\midrule
200 & 50 & Hybrid (NumPyro) & 184 & 432 & 904 & 1.010 & 2.4 \\
 &  & NUTS (NumPyro) & 1264 & 1490 & 2016 & 1.002 & 1.3 \\
\midrule
500 & 50 & Hybrid (NumPyro) & 396 & 438 & 844 & 1.012 & 1.1 \\
 &  & NUTS (NumPyro) & 2647 & 1816 & 3496 & 1.000 & 0.69 \\
\bottomrule
\end{tabular}
\end{table}

\begin{table}[htbp]
\centering
\caption{MCMC diagnostics for the logit link, averaged over 5 datasets.}
\label{tab:example2-diag_logit}
\small
\begin{tabular}{cclrrrrr}
\toprule
$N$ & $T$ & Algorithm & Wall time (s) & Min Bulk-ESS & Min Tail-ESS & Max $\hat{R}$ & Min Bulk-ESS/s \\
\midrule
20 & 50 & Hybrid (NumPyro) & 57 & 1900 & 3500 & 1.002 & 35 \\
 &  & NUTS (NumPyro) & 51 & 1897 & 2948 & 1.000 & 40 \\
 &  & NUTS (Stan) & 75 & 1808 & 3019 & 1.000 & 26 \\
\midrule
20 & 200 & Hybrid (NumPyro) & 184 & 1182 & 1985 & 1.000 & 6.5 \\
 &  & NUTS (NumPyro) & 166 & 1328 & 1982 & 1.000 & 8.2 \\
 &  & NUTS (Stan) & 320 & 1333 & 2193 & 1.000 & 4.3 \\
\midrule
20 & 500 & Hybrid (NumPyro) & 631 & 1469 & 2831 & 1.002 & 2.3 \\
 &  & NUTS (NumPyro) & 416 & 1891 & 3516 & 1.000 & 4.5 \\
 &  & NUTS (Stan) & 1290 & 1833 & 3381 & 1.000 & 1.4 \\
\midrule
50 & 50 & Hybrid (NumPyro) & 74 & 1557 & 2072 & 1.002 & 22 \\
 &  & NUTS (NumPyro) & 98 & 1545 & 2383 & 1.000 & 16 \\
 &  & NUTS (Stan) & 160 & 1408 & 1587 & 1.004 & 8.8 \\
\midrule
100 & 50 & Hybrid (NumPyro) & 116 & 1470 & 2608 & 1.000 & 13 \\
 &  & NUTS (NumPyro) & 161 & 1243 & 1967 & 1.004 & 7.7 \\
 &  & NUTS (Stan) & 399 & 1316 & 2405 & 1.002 & 3.3 \\
\midrule
200 & 50 & Hybrid (NumPyro) & 274 & 1207 & 1621 & 1.004 & 4.8 \\
 &  & NUTS (NumPyro) & 271 & 836 & 1220 & 1.008 & 3.2 \\
 &  & NUTS (Stan) & 928 & 765 & 1100 & 1.004 & 0.84 \\
\midrule
500 & 50 & Hybrid (NumPyro) & 656 & 1461 & 2173 & 1.000 & 2.2 \\
 &  & NUTS (NumPyro) & 967 & 1297 & 2208 & 1.002 & 1.4 \\
\bottomrule
\end{tabular}
\end{table}

\begin{table}[htbp]
\centering
\caption{Posterior recovery with the probit link in the $N = 50$ and $T = 50$ case. Numbers are posterior means averaged across five runs, and numbers in parentheses are average deviations from the true value.}
\label{tab:example2-bias_probit_N50_T50}
\small
\begin{tabular}{lrrr}
\toprule
Parameter & True & Hybrid (NumPyro) & NUTS (NumPyro) \\
\midrule
$\mu_\phi$ & 0.42 & 0.42 (-0.00) & 0.42 (-0.00) \\
$\psi_2$ & 0.50 & 0.51 (+0.01) & 0.51 (+0.01) \\
\addlinespace
$\nu_1$ & -1.00 & -1.01 (-0.01) & -1.01 (-0.01) \\
$\nu_2$ & -0.50 & -0.53 (-0.03) & -0.53 (-0.03) \\
$\nu_3$ & 0.00 & -0.01 (-0.01) & -0.01 (-0.01) \\
$\nu_4$ & 0.50 & 0.50 (-0.00) & 0.50 (+0.00) \\
$\nu_5$ & 1.00 & 1.01 (+0.01) & 1.02 (+0.02) \\
\addlinespace
$\mu_{\lambda_{1,2}}$ & 1.02 & 1.07 (+0.05) & 1.08 (+0.05) \\
$\mu_{\lambda_{1,3}}$ & 0.78 & 0.80 (+0.02) & 0.80 (+0.02) \\
$\mu_{\lambda_{1,4}}$ & 1.06 & 1.13 (+0.07) & 1.13 (+0.07) \\
$\mu_{\lambda_{1,5}}$ & 0.82 & 0.88 (+0.06) & 0.89 (+0.06) \\
\addlinespace
$\lambda_{2,2}$ & 0.89 & 0.92 (+0.03) & 0.92 (+0.03) \\
$\lambda_{2,3}$ & 0.82 & 0.87 (+0.05) & 0.88 (+0.05) \\
$\lambda_{2,4}$ & 0.85 & 0.90 (+0.04) & 0.90 (+0.05) \\
$\lambda_{2,5}$ & 0.87 & 0.91 (+0.03) & 0.91 (+0.03) \\
\bottomrule
\end{tabular}
\end{table}

\begin{table}[htbp]
\centering
\caption{Posterior recovery with the logit link in the $N = 50$ and $T = 50$ case. Numbers are posterior means averaged across five runs, and numbers in parentheses are average deviations from the true value.}
\label{tab:example2-bias_logit_N50_T50}
\small
\begin{tabular}{lrrrr}
\toprule
Parameter & True & Hybrid (NumPyro) & NUTS (NumPyro) & NUTS (Stan) \\
\midrule
$\mu_\phi$ & 0.42 & 0.43 (+0.01) & 0.43 (+0.01) & 0.43 (+0.01) \\
$\psi_2$ & 0.50 & 0.48 (-0.02) & 0.48 (-0.02) & 0.48 (-0.02) \\
\addlinespace
$\nu_1$ & -1.00 & -1.01 (-0.01) & -1.01 (-0.01) & -1.01 (-0.01) \\
$\nu_2$ & -0.50 & -0.52 (-0.02) & -0.52 (-0.02) & -0.52 (-0.02) \\
$\nu_3$ & 0.00 & -0.01 (-0.01) & -0.01 (-0.01) & -0.01 (-0.01) \\
$\nu_4$ & 0.50 & 0.50 (-0.00) & 0.50 (-0.00) & 0.49 (-0.01) \\
$\nu_5$ & 1.00 & 0.99 (-0.01) & 0.99 (-0.01) & 0.99 (-0.01) \\
\addlinespace
$\mu_{\lambda_{1,2}}$ & 1.02 & 1.11 (+0.08) & 1.10 (+0.08) & 1.10 (+0.08) \\
$\mu_{\lambda_{1,3}}$ & 0.78 & 0.86 (+0.07) & 0.86 (+0.07) & 0.86 (+0.07) \\
$\mu_{\lambda_{1,4}}$ & 1.06 & 1.11 (+0.05) & 1.11 (+0.05) & 1.11 (+0.05) \\
$\mu_{\lambda_{1,5}}$ & 0.82 & 0.85 (+0.03) & 0.85 (+0.03) & 0.85 (+0.03) \\
\addlinespace
$\lambda_{2,2}$ & 0.89 & 0.94 (+0.05) & 0.94 (+0.05) & 0.93 (+0.05) \\
$\lambda_{2,3}$ & 0.82 & 0.88 (+0.05) & 0.88 (+0.05) & 0.87 (+0.05) \\
$\lambda_{2,4}$ & 0.85 & 0.89 (+0.04) & 0.89 (+0.04) & 0.89 (+0.04) \\
$\lambda_{2,5}$ & 0.87 & 0.89 (+0.01) & 0.89 (+0.02) & 0.89 (+0.01) \\
\bottomrule
\end{tabular}
\end{table}

\section{Multinomial Response Nine-Indicator VAR(1) Model}

The prior distributions for this model were as follows. All elements of $\bm{\mu}_{\bm{\Phi}}$ were sampled independently from a standard normal distribution and the between-participant standard deviations around these means had prior $\mathcal{TN}_{[0,\infty]}(0, 0.5^{2})$. The population means of the diagonal elements of the lower-triangular Cholesky factors of the within-level process noise were log-normally distributed with location 0 and scale 1 and the off-diagonals were normally distributed with mean zero and variance $0.5^2$. The standard deviations of the logarithm of individual diagonal elements $\bm{L}_{1,i}$ around these population means had prior $\mathcal{TN}_{[0,\infty]}(0, 0.5^{2})$, and standard deviations of the off-diagonal elements had the same prior. The population means for all factor loadings had prior $\mathcal{N}(1.0, 0.5^{2})$ and for $\bm{\Lambda}_{1,i}$ the standard deviation of the individual factor loadings around these means had prior $\mathcal{TN}_{[0, \infty]}(0, 1)$. The log-diagonals of the Cholesky factors of $\bm{\Psi}_{2}$ had $\mathcal{N}(0,1)$ and the off-diagonal elements had prior $\mathcal{N}(0, 0.5^2)$. The intercepts had prior $\nu_{j} \sim \mathcal{N}(0, 4)$ for $j=1, \dots,9$. 

Since this model has a very large number of parameters, making tables comparing posterior means would be impractical. Instead, Figure \ref{fig:example4_posteriors} shows the average posterior mean across the five seeds for the two algorithms, of all population-level parameters. The fact that they all lie close to the diagonal, suggests that the implementations were correct.

\begin{figure}
    \includegraphics[width=\linewidth]{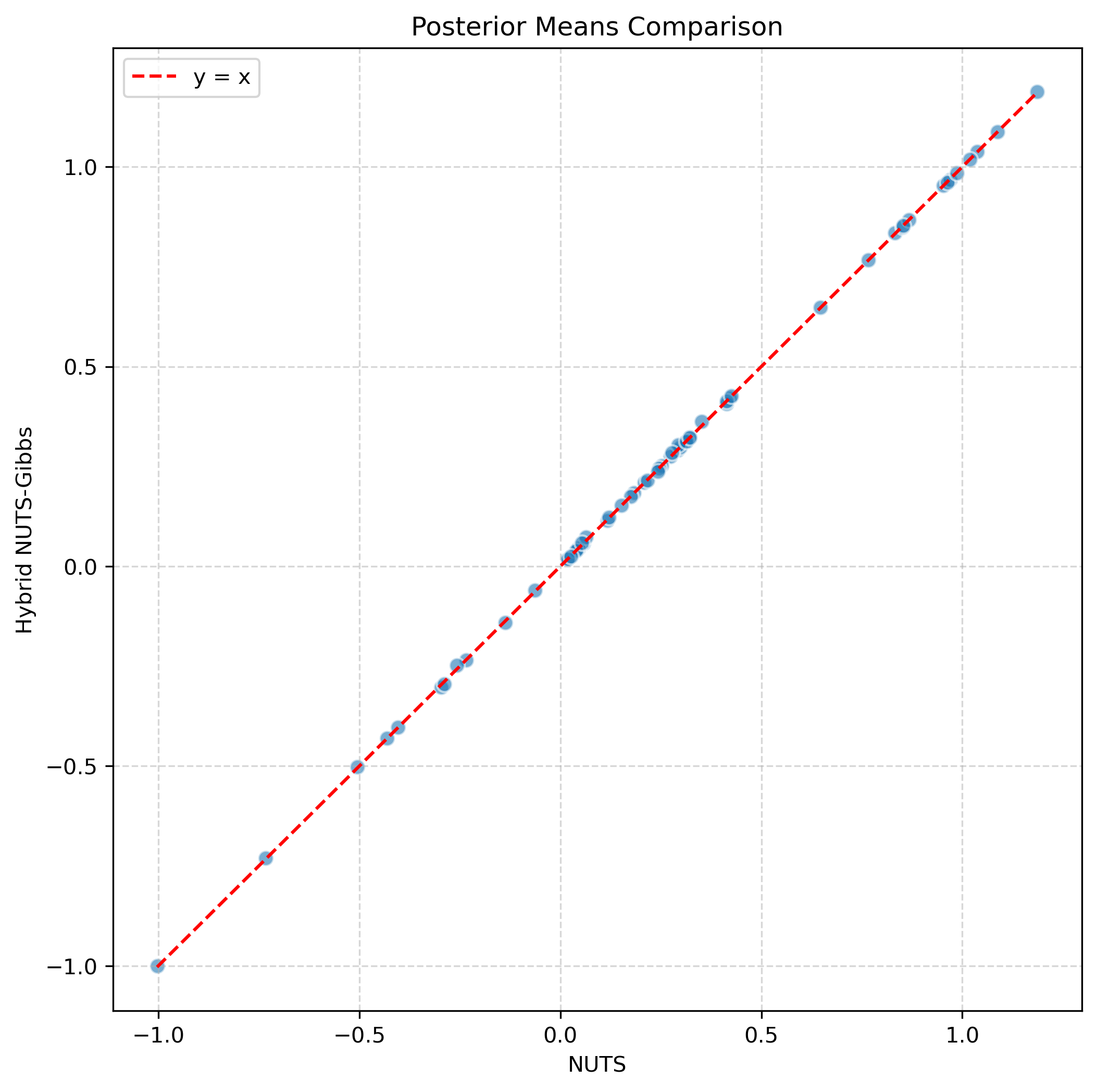}
    \caption{Comparison of parameter posterior means between the NUTS and Hybrid NUTS-Gibbs samplers for the VAR(1) model.}
    \label{fig:example4_posteriors}
\end{figure}

\end{document}